\documentclass{article}
\usepackage[margin=0.62in,a4paper]{geometry}
\usepackage{moresize}
\usepackage{algorithmicx}
\usepackage{algorithm}
\usepackage{algpseudocode}
\usepackage{float}
\usepackage{graphicx}
\usepackage[
backend=biber,
style=ieee,
sorting=none
]{biblatex}
\addbibresource{refer.bib}
\usepackage[colorlinks,linkcolor=blue]{hyperref}
\usepackage{tikz}
\usetikzlibrary{arrows.meta}
\tikzset{
	>={Latex[width=2mm,length=2mm]},
	base/.style = {rectangle, rounded corners, draw=black,
		minimum width=4cm, minimum height=1cm,
		text centered, font=\sffamily},
	process/.style = {base, minimum width=1cm, fill=orange!15,
		font=\ttfamily},
}
\linespread{1.5}
\usepackage{amsmath, amsthm,amssymb}
\newtheorem{theorem}{Theorem}[section]

\providecommand{\customgenericname}{}
\newcommand{\newcustomtheorem}[2]{%
  \newenvironment{#1}[1]
  {%
   \renewcommand\customgenericname{#2}%
   \renewcommand\theinnercustomgeneric{##1}%
   \innercustomgeneric
  }
  {\endinnercustomgeneric}
}

\newcustomtheorem{customthm}{Theorem}
\title{Data Forecasts of the Epidemic COVID-19 by Deterministic and Stochastic Time-Dependent Models}
\usepackage[utf8]{inputenc}
\usepackage{authblk}
\usepackage{cleveref}
\author[$*$]{Bo-Sheng Chen}
\author[$**$]{Zong-Ying Wu}
\author[$\dagger$]{Yen-Jia Chen}
\author[$\ddagger$]{Jann-Long Chern}

\affil[$*$]{\small{Corresponding author: Department of Mathematics, National Taiwan Normal University, Taipei, Taiwan.}}
\affil[ ]{ The authors are supported in part by National Science and Technology Council (NSTC), Taiwan,}
\affil[ ]{
No. NSTC 111-2813-C-003-073-M. \tt 40840103s@ntnu.edu.tw}
\affil[$**$]{\small{Department of Mathematics, National Taiwan Normal University, Taipei 11677, Taiwan. \tt 40840317s@ntnu.edu.tw}}
\affil[$\dagger$]{Data Science Degree Program, National Taiwan University and Academia Sinica, Taipei, Taiwan. \tt r10946007@ntu.edu.tw}
\affil[$\ddagger$]{\small{Department of Mathematics, National Taiwan Normal University, Taipei 11677, Taiwan.}}
\affil[ ]{The authors are supported in part by National Science and Technology Council (NSTC), Taiwan,}
\affil[ ]{
No. NSTC 110-2115-M-003-019-MY3 and NSTC 111-2218-E-008-004-MBK. \tt chern@math.ntnu.edu.tw}
\date{}
\begin{document}
\maketitle
\begin{abstract}
We propose a deterministic SAIVRD model and a stochastic SARV model of the epidemic COVID-19 involving asymptomatic infections and vaccinations to conduct data forecasts using time-dependent parameters. The forecast by our deterministic model conducts 10-day predictions to see whether the epidemic will ease or become more severe in the short term. The forecast by our stochastic model predicts the probability distributions of the final size and the maximum size to see how large the epidemic will be in the long run. The first forecast using the data set from the USA gives the relative errors within $3\%$ in $5$ days and $7\%$ in $10$ days for the prediction of isolated infectious cases. The distributions in the second forecast using the time-varying parameters from the first forecast are bimodal in our model with time-independent parameters in our simulations of smaller populations. For the model with time-dependent model, what are different are that there is another peak in the final size distribution, that the probability of minor outbreak is higher and that the maximum size distribution is oscillating with time-dependent parameters. The final size distributions are similar between different populations and so are the maximum size distributions, which means that we can expect that with the same parameters and
in a large population, the ratio of the final size and the maximum size are distributed similarly. The result shows that  under recent transmissibility of this disease in the USA, when an initial infection is introduced into all-susceptible large population, major outbreak occurs with around $95\%$ of the population and with high probability the epidemic is maximized to around $30\%$ of the population.
\end{abstract}

\section{Introduction}\label{intro}
\indent\par
According to WHO \cite{WEBSITE:1} report, as of 7:06pm CEST, 15 September 2022, there have been 607,745,726 confirmed cases of COVID-19, including 6,498,747 deaths, globally. The quick spread of this epidemic makes a worldwide impact on not only the human health but the economics and developments. Even worse, the uncontrollability of asymptomatic infections causes a problem that we may underestimate the transmission of the disease. \cite{ma2021global} estimates that the pooled percentage of asymptomatic infections among the confirmed cases was $40.50\%$ globally. Nevertheless, tough tactics, such as lockdown and traffic halt, to block the spread of the disease are not long plans because these require large social cost. Fortunately, vaccinations help the immunity and lower the probability of getting infected or dead from this disease \cite{WEBSITE:2}. As of 12 September 2022, a total of 12,613,484,608 vaccine doses, reported to WHO \cite{WEBSITE:1}, have been administered. Hence, it is important to consider the contribution of vaccinations to make more precise policy decision.

	 In \cite{hethcote2000mathematics} and \cite{brauer2019mathematical}, traditional epidemic models, such as SIS, SIR and SEIR models, and mathematical models of some well-known pandemics, such as smallpox, HIV and malaria, have been discussed in detail. In \cite{hsu2005modeling} and \cite{hsu2008role}, the authors considered asymptomatic infections and the quarantine policy when studying the transmission of SARS in 2003. \cite{de2020seiard}, \cite{chen2020time} and \cite{lin2020data} also investigate mathematical models of COVID-19 considering also the asymptomatic infections. Especially, \cite{chen2020time} and \cite{lin2020data} discretize the differential equations to difference equations and assume that some parameters are time-dependent. They use the data from different countries, such as China, the USA and Brazil, and apply a popular machine learning method: the Finite Impulse Response (FIR) filters to track the time series of these parameters and then predict the desired values, such as the number of infections and  recoveries. Furthermore, \cite{ghostine2021extended} considers vaccinations against COVID-19 and uses an Ensemble Kalman Filter to conduct data forecast. 
	 
	 In contrast of deterministic models, stochastic models are more realistic in epidemiology. Many stochastic models of epidemics are also proposed in recent years. The most common two ways of stochastic modeling of epidemics are to conduct a Markov branching process \cite{greenwood2009stochastic}, \cite{yuan2015optimal} and \cite{britton2019stochastic} and to conduct stochastic differential equations \cite{allen2017primer}, \cite{otunuga2017global} and \cite{rios2021studies}. The quantities in interest shall be probability distributions instead of exact values. For example, \cite{nakata2018mathcal} provides formulae of the final size in its deterministic model, while \cite{house2013big}, \cite{black2015computation} and \cite{icslier2020exact} give algorithms calculating the final size probability distribution.
	 
	The probability distributions of the final size and the maximum size are important topics in stochastic epidemic models. \cite{greenwood2009stochastic}, \cite{house2013big} and \cite{daniels1974maximum} are good materials for a review of calculations of these two distributions. Many optimized algorithms calculating these distributions have been proposed, such as  in \cite{black2015computation} and in \cite{icslier2020exact}. It was calculated that these probability distributions are bimodal.
	
	In this stage of the epidemic COVID-19, the quarantine policy is loosen gradually in many countries and so the data of quarantine is more unavailable. Hence, we will only consider an epidemic prevention policy: vaccination in our models in this paper. In this paper, we will conduct data forecasts to see whether the epidemic will ease or become more severe in the short term and how large the epidemic will be in the long run.
	
For the first question, we conduct a deterministic SAIVRD model using ordinary differential equations, which involves six compartments: susceptible people (S), (unconfirmed) asymptomatically infectious cases (A), (confirmed) isolated infectious cases (I), recovered people (R), vaccinated-immune people (V) and deaths (D). In this model, we will analyze the existence and the asymptotic stability of the disease-free and endemic equilibria related to the basic reproduction number $R_0$. It is shown that if $R_0>1,$ then there is a unique endemic equilibrium which is locally asymptotically stable, and if $R_0<1,$ then the disease-free equilibrium is locally asymptotically stable. We also assume the time-dependence of some parameters and track their time series to predict the time-varying parameters and values of confirmed cases $I$ using FIR filters. Using the data set \cite{WEBSITE:4}, \cite{WEBSITE:9} and \cite{WEBSITE:5} from the USA for $40$ days from June 30-th 2022 to August 8-th, 2022 to predict the next $10$ days, we get the relative prediction errors of $I$ within $3\%$ in $5$ days and $7\%$ in $10$ days, and the trend of change is
closed to the real data of $I$. This shows that our proposed model meets the real situation.

For the second question, we conduct a time-dependent Markov SARV model with no demography assuming that there will not be another outbreak and considering the administration of vaccinations. The four compartments are susceptible people (S), asmytomatic infections (A), isolations/recoveries/deaths (R) and vaccine-immune (V). Under this model, we approximate the probability distributions of the final size and the maximum size with one initial infectious person using the time-varying parameters obtained from the first forecast. 

The desired distributions are also bimodal in our model  with time-independent parameters in our simulations of smaller populations. For the model with time-dependent model, what are different are that there is another peak in the final size distribution, that the probability of ``minor outbreak'' \cite{allen2017primer} is higher with time-dependent parameters and that the maximum size distribution is oscillating. The distributions are similar between different populations and so we can estimate the “ratio” of the final size and the maximum size distributions by observing those in small populations. Under recent transmissibility of this disease in the USA, when an initial infection is introduced into all-susceptible large population, ``major outbreak'' \cite{allen2017primer} occurs with around $95\%$ of the population; with high probability the epidemic is maximized to around $30\%$ of the population. Moreover, to estimate the extinction probability of this epidemic in a large population, we assume the infinite population. As shown in \cite{icslier2020exact} and \cite{miller2018primer}, the plateau value of cumulative probability of the final size in a large population is approximately the extinction probability. It is also interesting to
note that this probability is approximately the reciprocal of the basic reproduction number $R_0^{\mathrm{SARV}}$ (defined in the SARV model) whenever $R_0^{\mathrm{SARV}}>1$,
which demonstrates numerically the result in \cite{allen2017primer}. 
These data forecasts can serve as reference to public health policy.

	We organize the article as follows. We derive our deterministic SAIVRD model, define the basic reproduction number $R_0$ and investigate the existence and asymptotic stability of the disease-free and endemic equilibria in \cref{secder}. \Cref{sectime} uses the FIR filter to train the time-varying rates to predict the trend of the epidemic in the short term. The probability distributions of the final size and the maximum size in the Markov SARV model with both time-independent and time-dependent parameters will be calculated in \cref{secfin}. \Cref{secnum} provides numerical results. \Cref{seccon} states our concluding remarks.
	
\section{The Derivation of the Deterministic SAIVRD Model and Stability Analysis}\label{secder}
	\indent\par
We firstly introduce our deterministic SAIVRD model in this section. The six compartments are defined in \cref{intro}. The model is under the following assumptions: there must be a period that an infected case is not confirmed yet which is asymptomatically infectious; asymptomatically infectious cases will not enter $R$ or $D$; a dead person cannot infect others; the system is closed, that is, there is no migration (the border is well-controlled); confirmed cases are well isolated so that they have very low adequate contact with others; nosocomial infections and re-infections are omitted.
 
The parameters are set as following and to be non-negative:

\begin{tabular}{|c|c|}
\hline
     Notation& Description \\
     \hline
     $B$& The new births per unit of time\\
     \hline
     $\mu$&Natural death rate\\
     \hline
     $\beta$&Transmission rate of $A$\\
     \hline
     $v$&Full-vaccination rate multiplying the vaccination efficiency\\
     \hline
     $w$&Progression rate from $A$ to $I$\\
     \hline
     $\rho$&Death rate from $I$\\
     \hline
     $\gamma$&Recovering rate from $I$\\
     \hline
\end{tabular}\\\\

Let $N_0$ be the initial total population.	The flow chart is shown as following.

\begin{tikzpicture}[node distance=1cm,
every node/.style={fill=white, font=\sffamily}, align=center]
\node[process] (S) {$S$};
\node[process, right of=S,xshift=2cm] (A) {$A$};
\node[process, left of=S,xshift=-1cm] (V) {$V$};
\node[process, right of=A,xshift=2cm] (I) {$I$};
\node[process, right of=I,xshift=2cm,yshift=1cm] (D) {$D$};
\node[process, right of=I,xshift=2cm,yshift=-1cm] (R) {$R$};
\draw[->]+(0pt,2cm)--node{$B$}(S);
\draw[->](S)--node{$\mu S$}+(0pt,-2cm);
\draw[->](V)--node{$\mu V$}+(0pt,-2cm);
\draw[->](A)--node{$\mu A$}+(0pt,-2cm);
\draw[->](I)--node{$\mu I$}+(0pt,-2cm);
\draw[->](R)--node{$\mu R$}+(0pt,-2cm);
\draw[->](S)--node{$vS$}(V);
\draw[->](S)--node{$\dfrac{\beta AS}{N_0}$}(A);
\draw[->](A)--node{$wA$}(I);
\draw[->](I)--node{$\rho I$}(D);
\draw[->](I)--node{$\gamma I$}(R);
\end{tikzpicture}

Then the model equations are \begin{equation}\label{eq:1}
	\begin{cases}
		S'(t)=B-vS(t)-\dfrac{\beta }{N_0}A(t)S(t)-\mu S(t)\\
		A'(t)=\dfrac{\beta }{N_0}A(t)S(t)-(w+\mu)A(t)\\
		I'(t)=wA(t)-(\rho+\gamma+\mu)I(t)\\
		V'(t)=vS(t)-\mu V(t)\\
		R'(t)=\gamma I(t)-\mu R(t)\\
		D'(t)=\rho I(t)
	\end{cases}
\end{equation} with the initial conditions $S(0)=S_0>0,A(0)=A_0>0,I(0)=V(0)=R(0)=D(0)=0.$ Clearly, we have the result in \cref{thm2.1}.

\begin{theorem}\label{thm2.1}
    All solutions of \cref{eq:1} always lie in the positively invariant set $$\Omega:=\{(S,A,I,V,R,D)\in\mathbb{R}^6:0\leq S+A+I+V+R+D\leq\frac{B}{\mu}+N_0\}$$
    for all $t\geq0$. In particular, $S$ and $A$ remain positive for all $t\geq0$.
\end{theorem}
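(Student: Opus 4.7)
The plan is to obtain the total-population bound by summing the six equations of \cref{eq:1} and to read off positivity of $S$ and $A$ directly from the structure of their individual equations; non-negativity of the remaining components will follow from a quasi-positivity (tangential) argument on the boundary faces $\{I=0\}$, $\{V=0\}$, $\{R=0\}$, $\{D=0\}$.

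\textbf{Step 1 (upper bound).} Adding the six equations of \cref{eq:1}, the gain--loss pairs $\pm vS$, $\pm \beta AS/N_0$, $\pm wA$, $\pm \gamma I$ cancel, leaving
\[
N'(t) \;=\; B - \mu\bigl(S(t)+A(t)+I(t)+V(t)+R(t)\bigr),
\]
where $N := S+A+I+V+R+D$. Using the non-negativity of every compartment (Step 3), this gives the scalar differential inequality $N'(t) \le B - \mu\bigl(N(t)-D(t)\bigr)$. I would then compare against the auxiliary ODE $y' = B - \mu y$, $y(0)=N_0$, via an integrating-factor argument to conclude $N(t) \le \max\{N_0,B/\mu\} \le B/\mu + N_0$ for all $t \ge 0$, which together with the obvious lower bound $N\ge 0$ gives membership in $\Omega$.

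\textbf{Step 2 (positivity of $A$ and $S$).} The $A$-equation is linear homogeneous in $A$: $A'(t) = \bigl(\beta S(t)/N_0 - (w+\mu)\bigr) A(t)$, so the explicit solution
\[
A(t) \;=\; A_0 \exp\!\int_0^t \bigl(\beta S(s)/N_0 - (w+\mu)\bigr)\, ds
\]
is strictly positive whenever $A_0 > 0$. For $S$, I would rewrite $S'(t)+\bigl(v+\beta A(t)/N_0+\mu\bigr)S(t) = B$ and multiply by the integrating factor $\exp \Phi(t)$ with $\Phi(t) := \int_0^t \bigl(v+\beta A(s)/N_0+\mu\bigr) ds$, obtaining $S(t) = S_0 e^{-\Phi(t)} + \int_0^t B\, e^{-(\Phi(t)-\Phi(s))} ds$, which is strictly positive since $S_0>0$ and $B \ge 0$.

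\textbf{Step 3 (non-negativity of $I,V,R,D$, and closing the loop).} Since $I(0)=V(0)=R(0)=D(0)=0$, I would apply the standard quasi-positivity criterion: on each boundary face the relevant component of the vector field is non-negative, namely $I'|_{I=0} = wA \ge 0$, $V'|_{V=0}=vS \ge 0$, $R'|_{R=0}=\gamma I \ge 0$, $D'|_{D=0}=\rho I \ge 0$, so none of these variables can cross from $0$ into the negative region. The only subtlety I anticipate is the apparent circularity between Steps~1 and 2: the positivity arguments for $S$ and $A$ use integrating factors whose finiteness depends on a priori boundedness of $A$ and $S$. I would resolve this by working on the maximal interval of existence $[0,T)$, noting continuity of all components implies finiteness of the integrals in Step~2 on any compact subinterval, combining with the uniform bound from Step~1, and invoking a standard continuation argument to extend the solution to $[0,\infty)$.
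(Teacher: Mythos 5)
Your Steps 2 and 3 are sound and essentially the standard route (the paper only sketches them with ``similarly''): the explicit exponential formula for $A$, the integrating-factor representation of $S$, the quasi-positivity check on the faces $\{I=0\},\{V=0\},\{R=0\},\{D=0\}$, and the maximal-interval/continuation remark to break the apparent circularity are all fine. The genuine gap is in Step 1. Your summation is the correct one: because the $D$-equation of \cref{eq:1} carries no $-\mu D$ term, the six equations add up to
\[
N'(t)=B-\mu\bigl(N(t)-D(t)\bigr)=B-\mu N(t)+\mu D(t)\;\ge\;B-\mu N(t),
\]
so the comparison with the auxiliary ODE $y'=B-\mu y$ runs in the \emph{wrong} direction: it bounds $N$ from \emph{below} by $y$, not from above. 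To conclude $N(t)\le\max\{N_0,B/\mu\}$ you would need $N'\le B-\mu N$, i.e.\ $D\le 0$, which contradicts the non-negativity you establish in Step 3. This is not repairable by a sharper estimate: only the living population $L:=S+A+I+V+R$ obeys $L'=B-\mu L-\rho I\le B-\mu L$ and hence stays below $B/\mu+N_0$, whereas $D(t)=\rho\int_0^t I(s)\,ds$ is nondecreasing and diverges whenever $I$ is not integrable on $[0,\infty)$ (as happens near the endemic equilibrium of \cref{thm3.3} when $R_0>1$ and $\rho>0$). So $N=L+D$ can eventually leave $\Omega$, and your Step 1 as planned cannot be completed.

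For comparison, the paper's own proof simply asserts $N'=B-\mu N$ and solves that ODE explicitly; this identity would hold only if the last equation of \cref{eq:1} read $D'=\rho I-\mu D$. Your bookkeeping is the correct one for the model as written, and it exposes that the invariance claim should be formulated for $S+A+I+V+R$ (or the $D$-equation amended) rather than for the full sum including $D$. In short: your positivity arguments go through, but the upper bound in $\Omega$ does not follow from your Step 1 --- nor, in fact, from the paper's.
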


\begin{proof}
    Note that $N=S+A+I+V+R+D$ and $N'=S'+A'+I'+V'+R'+D'=B-\mu N\mbox{ as functions of }t.$ Then $$0\leq N(t)=\frac{B}{\mu}+\big(N_0-\frac{B}{\mu}\big)e^{-\mu t}\leq\frac{B}{\mu}+N_0.$$Since $\displaystyle S'\geq -(v+\dfrac{\beta}{N_0} N(t)+\mu)S\geq-(v+\dfrac{\beta }{N_0}(\frac{B}{\mu}+N_0)+\mu)S,$ we have $$S\geq S_0\exp\big(-(v+\dfrac{\beta }{N_0}(\frac{B}{\mu}+N_0)+\mu)\big)>0.$$ Similarly for $A,I,V,R$ and $D$ and this theorem is proved.
\end{proof}

Since the number of deaths can be calculated by $D=N-S-A-I-V-R$, the model can be reduced to have five variables $S,A,I,V$ and $R$.

We now derive the basic reproduction number $R_0$ in our SAIVRD model.
The Jacobian matrix of the reduced model from \cref{eq:1} at an equilibrium $(S^*,A^*,I^*,V^*,R^*)$ is given by
$$
\begin{bmatrix}
-(v+\dfrac{\beta}{N_0} A^*+\mu)&-\dfrac{\beta}{N_0} S^*&0&0&0\\
\dfrac{\beta}{N_0} A^*&\dfrac{\beta }{N_0}S^*-(w+\mu)&0&0&0\\
0&w&-(\rho+\gamma+\mu)&0&0\\
v&0&\gamma&-\mu&0\\
0&0&0&0&-\mu
\end{bmatrix},$$whose characteristic polynomial is 
\begin{equation}\label{char}
    \begin{aligned}
     P(\lambda)=&(-(\rho+\gamma+\mu)-\lambda)(-\mu-\lambda)^2((v+\mu)(w+\mu-\dfrac{\beta}{N_0}S^*)\\
    &+\dfrac{\beta}{N_0} A^*(w+\mu)-(v+w+2\mu+\dfrac{\beta}{N_0}(A^*-S^*))\lambda+\lambda^2).
    \end{aligned}
\end{equation}

At the disease-free equilibrium (DFE) $(\displaystyle\frac{B}{v+\mu},0,0,\frac{v}{\mu}\frac{B}{v+\mu},0),$ the Jacobian of \cref{eq:1} is $$
\begin{bmatrix}
-(v+\mu)&\displaystyle-\frac{B\beta}{N_0(v+\mu)}&0&0&0\\
0&\displaystyle\frac{B\beta}{N_0(v+\mu)}-(w+\mu)&0&0&0\\
0&w&-(\rho+\gamma+\mu)&0&0\\
v&0&0&-\mu&0\\
0&0&\gamma&0&-\mu
\end{bmatrix},$$
whose eigenvalues are $-(v+\mu),-(\rho+\gamma+\mu),-\mu,-\mu,\displaystyle\frac{B\beta}{N_0(v+\mu)}-(w+\mu).$ Hence, we have $\dfrac{B\beta}{N_0(v+\mu)}-(w+\mu)<0$ if and only if the DFE is asymptotically stable. Hence, if we put \begin{equation}\label{r0}
    R_0=\frac{B\beta}{N_0(v+\mu)(w+\mu)},
\end{equation}then we have \cref{thm3.1}.
\begin{theorem}\label{thm3.1}
   The DFE $(\displaystyle\frac{B}{v+\mu},0,0,\frac{v}{\mu}\frac{B}{v+\mu},0)$ is locally asymptotically stable if and only if $R_0<1.$
\end{theorem}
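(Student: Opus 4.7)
The plan is to invoke the standard linearization principle: a hyperbolic equilibrium of a smooth ODE system is locally asymptotically stable if and only if every eigenvalue of the Jacobian evaluated there has strictly negative real part. Since the Jacobian at the DFE has already been displayed above and is block upper triangular (in fact nearly diagonal), its spectrum can be read off directly without any further computation.

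Four of the five eigenvalues, namely $-(v+\mu)$, $-(\rho+\gamma+\mu)$, and $-\mu$ (with multiplicity two), are manifestly negative under the standing non-negativity of the parameters together with $\mu>0$. Hence the sign of the real part of the remaining eigenvalue
\[
\lambda^{*} \;:=\; \frac{B\beta}{N_0(v+\mu)} - (w+\mu)
\]
alone dictates both hyperbolicity and stability at the DFE. Now $\lambda^{*}<0$ is equivalent to $\dfrac{B\beta}{N_0(v+\mu)(w+\mu)}<1$, i.e.\ to $R_0<1$, which gives the ``if'' direction; conversely $\lambda^{*}>0$ produces an eigenvalue with positive real part and hence instability, which is the contrapositive of the ``only if'' direction.

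The only subtle point is the borderline $R_0=1$ case, at which $\lambda^{*}=0$ and the direct linearization test becomes inconclusive; however, since the theorem only asserts local asymptotic stability under the strict inequality $R_0<1$, this borderline does not obstruct the argument. Beyond that single observation the proof is pure bookkeeping---there is essentially no obstacle, because the block-triangular structure of the Jacobian at the DFE has already done all the diagonalization work, and the relationship between the threshold $R_0=1$ and the critical eigenvalue $\lambda^{*}=0$ is algebraically immediate from the definition \eqref{r0}.
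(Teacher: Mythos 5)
Your proof is essentially identical to the paper's: the paper likewise reads the eigenvalues $-(v+\mu)$, $-(\rho+\gamma+\mu)$, $-\mu$, $-\mu$, and $\frac{B\beta}{N_0(v+\mu)}-(w+\mu)$ directly off the Jacobian at the DFE and observes that the sign of the last one is governed by whether $R_0<1$. Your remark about the borderline $R_0=1$ is the one place you are more careful than the paper, though note that because the statement is an ``if and only if,'' the case $R_0=1$ technically belongs to the ``only if'' direction and is not fully disposed of by linearization in either your argument or the paper's.
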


We then define the basic reproduction number by \cref{r0} and explain why this definition of $R_0$ is reasonable in epidemiology as follows. Firstly, the value of $R_0$ has positive correlations with $B$ and $\beta$, which means that there are more people will be infected when more people are born 
    or when more people are contacted by infectious people. Secondly, $R_0$ has negative correlations
    with $w,v$ and $\mu$, which means that if we can improve the efficiency of disease screening (to shorten the period of infecting others), improve the efficiency of vaccinations or the infected people die fast, then fewer people would be infected.

It is also worthwhile to note that the rates of recovery $\gamma$ and death $\rho$ of the confirmed cases are not involved in $R_0$. This is because the class $I$ is well-controlled such that people in the class $I$ cannot cause further infections.

There may be an equilibrium in which $A^*$ and $I^*$ are positive in \cref{eq:1}, whose stability means that the human will coexist the virus finally and which is called an endemic equilibrium. The existence and asymptotic stability of the endemic equilibria are stated in \cref{thm3.3}.
\begin{theorem}\label{thm3.3}
	There is a unique locally asymptotically stable endemic equilibrium whenever $R_0>1$.
\end{theorem}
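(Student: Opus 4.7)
The plan is to split the argument into two parts: establishing the existence and uniqueness of the endemic equilibrium when $R_0>1$, and then verifying local asymptotic stability by linearization.

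For the algebraic part, I would set $S'=A'=I'=V'=R'=0$ in \cref{eq:1} and require $A^*,I^*>0$. Because $A^*\neq 0$, the equation $A'=0$ collapses to $\frac{\beta}{N_0}S^*=w+\mu$, which fixes $S^*=\frac{N_0(w+\mu)}{\beta}$ uniquely. Substituting this back into $S'=0$ yields $A^*=\frac{B}{w+\mu}-\frac{N_0(v+\mu)}{\beta}$, and using the definition \cref{r0} of $R_0$ one checks directly that $A^*>0$ is equivalent to $R_0>1$. The remaining components are then uniquely determined by $I^*=\frac{wA^*}{\rho+\gamma+\mu}$, $V^*=\frac{vS^*}{\mu}$, and $R^*=\frac{\gamma I^*}{\mu}$, all automatically positive under $R_0>1$. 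This gives a unique endemic equilibrium.

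For stability, I would substitute this equilibrium into the factored characteristic polynomial \cref{char}. Three eigenvalues are read off immediately from the explicit linear factors: $\lambda=-\mu$ (a double root) and $\lambda=-(\rho+\gamma+\mu)$, all strictly negative. The remaining two eigenvalues are the roots of the quadratic factor, and the decisive simplification is to apply the equilibrium identity $\frac{\beta}{N_0}S^*=w+\mu$, which annihilates the $(v+\mu)(w+\mu-\frac{\beta}{N_0}S^*)$ contribution. After collecting terms the quadratic takes the monic form $\lambda^2+p\lambda+q$ with $p=v+\mu+\frac{\beta}{N_0}A^*>0$ and $q=\frac{\beta}{N_0}A^*(w+\mu)>0$, both strict because $A^*>0$ when $R_0>1$. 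The Routh--Hurwitz criterion for quadratics then delivers two roots with negative real parts, and combining all five eigenvalues with the Hartman--Grobman theorem yields local asymptotic stability.

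The main obstacle I anticipate is purely bookkeeping: one must be meticulous about signs when collapsing the quadratic factor of \cref{char} under the equilibrium relation, since a misplaced sign in the coefficient of $\lambda$ would make $p<0$ and invalidate the Routh--Hurwitz conclusion. I would therefore cross-check the reduced quadratic against a direct cofactor expansion of $\det(J-\lambda I)$ along the last column and the third column before concluding. Once the signs of $p$ and $q$ are verified, the remainder of the argument is a routine application of linear stability theory.
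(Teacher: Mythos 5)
Your proof is correct and follows essentially the same route as the paper: solve the equilibrium equations to get $S^*=\frac{N_0(w+\mu)}{\beta}$ and $A^*>0$ precisely when $R_0>1$, then apply Routh--Hurwitz to the quadratic factor of \cref{char} after invoking $\frac{\beta}{N_0}S^*=w+\mu$. Your sign-bookkeeping worry is in fact well founded: as printed, \cref{char} gives the quadratic factor the linear coefficient $-\bigl(v+w+2\mu+\frac{\beta}{N_0}(A^*-S^*)\bigr)$, whereas a direct expansion of the relevant $2\times2$ block of the Jacobian yields $+\bigl(v+w+2\mu+\frac{\beta}{N_0}(A^*-S^*)\bigr)$, so your reduced quadratic $\lambda^2+p\lambda+q$ with $p=v+\mu+\frac{\beta}{N_0}A^*>0$ and $q=\frac{\beta}{N_0}A^*(w+\mu)>0$ is the correct one and agrees with the stability conclusion the paper states in prose.
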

\begin{proof}
	Solving	$$\begin{cases}
	0=B-vS^*-\dfrac{\beta}{N_0}A^*S^*-\mu S^*\\
	0=\dfrac{\beta}{N_0} A^*S^*-(w+\mu)A^*\\
	0=wA^*-(\rho+\gamma+\mu)I^*\\
	0=vS^*-\mu V^*\\
	0=\gamma I^*-\mu R^*,
	\end{cases}$$ we have $$A^*=\frac{\rho+\gamma+\mu}{w}I^*,R^*=\frac{\gamma}{\mu}I^*,S^*=\frac{w+\mu}{\beta}N_0,V^*=\frac{v(w+\mu)}{\beta\mu}N_0,$$where $I^*=\dfrac{B-(v+\mu)(w+\mu)N_0/\beta}{(1+1/w)(w+\mu)(\rho+\gamma+\mu)}.$ Since $I^*>0$ if and only if $R_0>1,$ the existence follows.\\
    Plug this equilibrium into \cref{char} and then under the assumptions, since\\ $\dfrac{\beta}{N_0} A^*(w+\mu)>0,$ the roots of $P(\lambda)$ are all negative if and only if $v+w+2\mu+\dfrac{\beta}{N_0}(A^*-S^*)>0,$ which can be easily calculated to be equivalent to $R_0+w>0$ and so this theorem is proved. 
\end{proof}

\section{Prediction Methods Using Time-Varying Parametric Models}\label{sectime}
\indent\par
In this section, we assume the time-dependence of some parameters in our SAIVRD model derived in \cref{secder} to conduct the predictions of infections and time-varying rates in the short term. Since the data is updated in days, it is reasonable to consider the discrete-time model instead of the ordinary differential equations.

Suppose that the parameters $B$ and $\mu$ are constants and that the other parameters $\rho(t),\gamma(t),w(t),v(t)$ and $\beta(t)$ are time-dependent. Then we have the model difference equations in \cref{eq5}.

\begin{equation}\label{eq5}
	\begin{cases}
		S(t+1)-S(t)=B-(v(t)+\mu)S(t)-\dfrac{\beta(t)}{N_0} A(t)S(t)\\
		A(t+1)-A(t)=\dfrac{\beta(t)}{N_0} A(t)S(t)-(w(t)+\mu)A(t)\\
		I(t+1)-I(t)=w(t)A(t)-(\rho(t)+\gamma(t)+\mu)I(t)\\
		V(t+1)-V(t)=v(t)S(t)-\mu V(t)\\
		R(t+1)-R(t)=\gamma(t)I(t)-\mu R(t)\\
		D(t+1)-D(t)=\rho(t) I(t).
	\end{cases}
\end{equation}

We will use the known data $S(t),A(t),I(t),V(t),R(t)$ and $D(t)$ for $t=0,1,\cdots,\\T-1$ to track these five time series: $\rho(t),\gamma(t),w(t),v(t)$ and $\beta(t).$ Then we predict the values of $S(T),A(T),I(T),V(T),R(T),D(T),\rho(T-1),\gamma(T-1),w(T-1),v(T-1)$ and $\beta(T-1).$

For $t=0,1,\cdots,T-2,$ we solve $\rho(t),\gamma(t),w(t),v(t)$ and $\beta(t)$ in \cref{eq6}.
\begin{equation}\label{eq6}
\begin{cases}
  \rho(t)=\dfrac{D(t+1)-D(t)}{I(t)},\\
\gamma(t)=\dfrac{R(t+1)-(1-\mu)R(t)}{I(t)},\\
w(t)=\dfrac{I(t+1)+(\rho(t)+\gamma(t)+\mu-1)I(t)}{A(t)},\\
v(t)=\dfrac{V(t+1)-(1-\mu)V(t)}{S(t)},\\ \beta(t)=\dfrac{A(t+1)+(w(t)+\mu-1)A(t)}{A(t)S(t)}\cdot N_0
,\mbox{ for }t=0,1,\cdots,T-2. 
\end{cases}
\end{equation}

In our numerical experiment in \cref{secnum}, some $\beta(t)$ are negative, which is epidemiologically unreasonable. In this case, we reset $\beta(t)$ to be $0$. 

We will apply a data set in the USA from \cite{WEBSITE:4}, \cite{WEBSITE:9} and \cite{WEBSITE:5} that is of large number of infections and so the Finite Impulse Response (FIR) filters is an appropriate method to predict the time-varying rates \cite{chen2020time}:  
 \begin{equation}\label{fir}
 \begin{cases}
  \hat{\rho}(t)=x_1\rho(t-1)+\dots+x_{J_1}\rho(t-J_1)+x_0=\sum_{j=1}^{J_1}x_j\rho(t-j)+x_0\\
   \hat{\gamma}(t)=y_1\gamma(t-1)+\dots+y_{J_2}\gamma(t-J_2)+y_0=\sum_{j=1}^{J_2}y_j\gamma(t-j)+y_0\\
     \hat{w}(t)=z_1w(t-1)+\dots+z_{J_3}w(t-J_3)+z_0=\sum_{j=1}^{J_3}z_jw(t-j)+z_0\\
  \hat{v}(t)=u_1v(t-1)+\dots+u_{J_4}v(t-J_4)+u_0=\sum_{j=1}^{J_4}u_jv(t-j)+u_0\\
 \hat{\beta}(t)=b_1\beta(t-1)+\dots+b_{J_5}\beta(t-J_5)+b_0=\sum_{j=1}^{J_5}b_j\beta(t-j)+b_0,
 \end{cases}
 \end{equation}
 where $J_i, i=1,2,3,4,5$ are orders of the five FIR filters and $x_j,y_j,z_j,u_j,b_j$ are coefficients of the impulse responses of these five FIR filters. The coefficients can be determined by considering the following minimization problems.
 \begin{equation}\label{min}
 \begin{cases}
\min\limits_{{\bf{x}}\in\mathbb{R}^{J_1+1}}\sum_{t=J_1}^{T-2}(\rho(t)-\hat{\rho}(t))^2+\alpha_1\sum_{j=0}^{J_1}x_j^2\\ 
\min\limits_{{\bf{y}}\in\mathbb{R}^{J_2+1}} \sum_{t=J_2}^{T-2}(\gamma(t)-\hat{\gamma}(t))^2+\alpha_2\sum_{j=0}^{J_2}y_j^2\\
\min\limits_{{\bf{z}}\in\mathbb{R}^{J_3+1}}\sum_{t=J_3}^{T-2}(w(t)-\hat{w}(t))^2+\alpha_3\sum_{j=0}^{J_3}z_j^2\\
   \min\limits_{{\bf{u
 }}\in\mathbb{R}^{J_4+1}}\sum_{t=J_4}^{T-2}(v(t)-\hat{v}(t))^2+\alpha_4\sum_{j=0}^{J_4}u_j^2\\
 \min\limits_{{\bf{b
 }}\in\mathbb{R}^{J_5+1}}\sum_{t=J_5}^{T-2}(\beta(t)-\hat{\beta}(t))^2+\alpha_5\sum_{j=0}^{J_5}b_j^2
 \end{cases}
 \end{equation}
 using rigid regression, where $\alpha_i,i=1,2,3,4,5$ are regulation parameters, which avoid overfitting. After predicting the values of $\hat{\rho}(T-1),\hat{\gamma}(T-1),\hat{w}(T-1),\hat{v}(T-1)$ and $\hat{\beta}(T-1),$ we plug these data back into \cref{eq5} to get the desired values $\hat{S}(T),\hat{A}(T),$ $\hat{I}(T),\hat{V}(T),\hat{R}(T)$ and $\hat{D}(T)$ by replacing $t$ with $T-1$.
 
After doing prediction for one day, we can treat this predicted day as known data to predict more one day, and step by step we calculate more days. More precisely, we will predict the values of $\hat{\rho}(t),\hat{\gamma}(t),\hat{w}(t),\hat{v}(t),\hat{\beta}(t),\hat{S}(t+1),\hat{A}(t+1),\hat{I}(t+1),\hat{V}(t+1),\hat{R}(t+1)$ and $\hat{D}(t+1)$ for $t=T-1,T,\cdots,T+d-1.$

In this process, we still need to determine the orders $J_i,i=1,2,3,4,5$ and find the formulae to solve the optimization problems in \cref{min}.
Given the orders, we can use a result from \cite{lin2020data} named ``Normal Gradient Equation'' and described as follows, to compute the desired coefficients.
  Let $f(0),f(1),\cdots,f(T-2)$ be the known data, and the FIR filter  $\hat{f}(t)=\sum_{j=1}^{J}x_jf(t-j)+x_0$ be the predicted data, where $t=J,\cdots,T-2$. Define the function $F$ by $$F(x_0,x_1,\dots,x_J;\alpha)=\sum_{t=J}^{T-2}(f(t)-\hat{f}(t))^2+\alpha\sum_{j=0}^{J}x_j^2,$$
  where $\alpha$ is the regulation parameter.
 	If $\det(\alpha I+A)\neq 0,$ then $$(\alpha I+A)^{-1}{\bf{b}}=\operatornamewithlimits{argmin}\limits_{{\bf{x}}\in\mathbb{R}^{J+1}}F({\bf{x}},\alpha),$$ where $I$  is $(J+1)\times (J+1)$ identity matrix,\\
 	$A=\begin{bmatrix}
 	(T-J-1) & \sum_{t=J}^{T-2}f(t-1) & \cdots & \sum_{t=J}^{T-2}f(t-J)\\
 	\sum_{t=J}^{T-2}f(t-1) & \sum_{t=J}^{T-2}(f(t-1))^2 & \cdots & \sum_{t=J}^{T-2}f(t-1)f(t-J)\\
 	\vdots & \vdots & \ddots & \vdots \\
 	\sum_{t=J}^{T-2}f(t-J) & \sum_{t=J}^{T-2}f(t-J)f(t-1) & \cdots & \sum_{t=J}^{T-2}(f(t-J))^2
 	\end{bmatrix}$ and\\
 	${\bf{b}}=\begin{bmatrix}
 	\sum_{t=J}^{T-2}f(t)\\
 	\sum_{t=J}^{T-2}f(t)f(t-1)\\
 	\vdots \\
 	\sum_{t=J}^{T-2}f(t)f(t-J)
 	\end{bmatrix}$.

We give the process of calculation in \cref{algorder} to \cref{algsev} \cite{lin2020data}. Firstly, we use \cref{algorder} to find the best choice of orders. Then after some pre-processing of data, we use \cref{algnext} to predict the next-day data of $S,A,I,V,R,D$ and the time-varying rates. Finally, we predict the data for several days in \cref{algsev}. 

The procedure of \cref{algorder} is described as follows. Firstly, we input the known data (the time-dependent parameters), and decide how large $\ell_T$ the training set we want to extract from is, bounds $U_J$ and $L_J$ and regulation parameters. Then we use different $J$'s to find the best $J$ which gives the least residual.

\begin{algorithm}
\caption{Best Order Searcher}
\label{algorder}
		\textbf{Input}: Data $=(f(0),f(1),\cdots,f(T-2))$; training size: $\ell_T$; lower bound of order: $L_J$ ; upper bound of order: $U_J$; Regulation Parameter: $\alpha$.\\
		Define $\mathrm{Data}_{T}=(f(0),f(1),\cdots, f(\ell_T-1))$ and $\mathrm{Data}_V=(f(\ell_T),f(\ell_T+1),\cdots,f(T-2))$.\\
		Calculate the validation length: $\ell_V\leftarrow T-\ell_T-1$.
		\begin{algorithmic}
			\For{$J\leftarrow L_J$ \textbf{to} $U_J$}
			\For{$t\leftarrow\ell_T$ \textbf{to} $T-2$}
			\State Calculate ${\bf{x}}=(x_0,x_1,\cdots,x_J)^T=\operatornamewithlimits{argmin}\limits_{{\bf{x}}\in\mathbb{R}^{J+1}}F({\bf{x}},\alpha)$
			\State Estimate $\hat{f}_J(t)=\displaystyle\sum_{j=1}^Jx_jf(t-j)+x_0,$ where $f(t-1),f(t-2),\cdots,f(\ell_T)$ are taken to be $\hat{f}_J(t-1),\hat{f}_J(t-2),\cdots,\hat{f}_J(\ell_T)$ and $f(\ell_T-1),f(\ell_T-2),\cdots,f(t-J+1),f(t-J)$ are taken from $\mathrm{Data}_T$.
			\EndFor
   \State Calculate $\mathrm{err}(J):=\displaystyle\sum_{t=\ell_T}^{T-2}|\hat{f}_J(t)-f(t)|.$
   \EndFor 
		\end{algorithmic}
		\textbf{return} Order of FIR: $J_{\mathrm{fit}}=\operatornamewithlimits{argmin}\limits_{J=L_J,\cdots,U_J}\mathrm{err}(J)$.
\end{algorithm}

We next conduct short-term prediction in \cref{algnext}. Due to the limitation of available data from \cite{WEBSITE:5}, we need to make more modifications and assumptions as follows. Firstly, the cumulative confirmed cases $\tilde{I}$ and cumulative vaccinations $\tilde{V}$ provided in our data set \cite{WEBSITE:5} is different from the definition of $I$ and $V$ in our model. Our definition of $I$ is the simultaneous confirmed cases, and $V$ is the number of valid immunity by vaccination. To adapt our definition, we update $I$ to be $(\tilde{I}-R-D)(1-\mu)$ and update $V$ to be $\tilde{V}\sigma(1-\mu),$ where $\sigma$ is the efficiency of vaccination. Secondly, the asymptotic infections $A$ is updated by $A=pI,$ where $p$ is the estimated ratio between asymptomatic and symptomatic infections. Thirdly, $S(t)$ is updated by $S(t)=(N_0-A(t)-I(t)-V(t)-R(t)-D(t)+tB)(1-\mu)$ for $t=0,1,\cdots,T-1.$

After these settings, we can predict the data for one day by \cref{algnext}. Firstly, we calculate the time-varying rate by known data and reset $\beta(t)$ if it is negative. Then use \cref{algorder} to find the best FIR orders for prediction and calculate $\operatornamewithlimits{argmin}\limits_{{\bf{x}}\in\mathbb{R}^{J+1}}F({\bf{x}},\alpha)$ to predict the rates. Finally, plug these predicted rates into \cref{eq5} to predict the data of the six compartments.

\begin{algorithm}
\caption{Short-Term Prediction}
\label{algnext}
\textbf{Input}: Initial total population $N_0$, number $T$ of days of known data, revised numbers $S(t),A(t),I(t),V(t),R(t)$ and $D(t)$ for $t=0,1,\cdots,T-1,$ and regulation parameters $\alpha_i,i=1,2,3,4,5.$
	\begin{algorithmic}
\For{$t\leftarrow 0$ \textbf{to} $T-2$}
\State Calculate $\rho(t),\gamma(t),w(t),v(t)$ and $\beta(t)$ by \cref{eq6}.
 \If{$\beta(t)<0$}
        \State $\beta(t)\leftarrow0.$
    \EndIf
\EndFor
	
Find the best FIR orders $J_i,i=1,2,3,4,5$ by \cref{algorder}.\\
Calculate the coefficients by $\operatornamewithlimits{argmin}\limits_{{\bf{x}}\in\mathbb{R}^{J+1}}F({\bf{x}},\alpha)$ using the best orders.\\
 Calculate $\hat{\rho}(T-1),\hat{\gamma}(T-1),\hat{w}(T-1),\hat{v}(T-1)$ and $\hat{\beta}(T-1)$ by \cref{fir} and $\hat{S}(T),\hat{A}(T),\hat{I}(T),\hat{V}(T),\hat{R}(T)$ and $\hat{D}(T)$ by \cref{eq5}.
  \If{$\hat{\beta}(T-1)<0$}
        \State $\hat{\beta}(T-1)\leftarrow0.$
    \EndIf\end{algorithmic}
\textbf{return} $\hat{\rho}(T-1),\hat{\gamma}(T-1),\hat{w}(T-1),\hat{v}(T-1),\hat{\beta}(T-1),\hat{S}(T),\hat{A}(T),\hat{I}(T),\hat{V}(T),\hat{R}(T)$ and $\hat{D}(T)$.
\end{algorithm}

When we get the prediction for one day, we can treat them as the known data to predict more days by \cref{algsev}. Firstly, apply \cref{algnext} to predict the data for one day. Then iteratively treat the predicted data as known data to conduct predictions for several days and our goal of this section is accomplished so far.

\begin{algorithm}[h!]
\caption{Prediction for Several Days}
\label{algsev}
\textbf{Input}: Same as in \cref{algnext} and Number $d$ of predicted days in addition.\\
Calculate $\hat{\rho}(T-1),\hat{\gamma}(T-1),\hat{w}(T-1),\hat{v}(T-1)$ and $\hat{\beta}(T-1)$, and $\hat{S}(T),\hat{A}(T),\hat{I}(T),\hat{V}(T),\hat{R}(T)$ and $\hat{D}(T)$ by \cref{algnext}.
\begin{algorithmic}
\For{$t\leftarrow T$ \textbf{to} $T+d-1$}
\State Redefine $I$ by concatenating $I$ and $\hat{I}(t)$ and $V,R,D$ are redefined in the similar way.
\State Calculate $\hat{\rho}(t),\hat{\gamma}(t),\hat{w}(t),\hat{v}(t)$ and $\hat{\beta}(t),$ and $\hat{S}(t+1),\hat{A}(t+1),\hat{I}(t+1),\hat{V}(t+1),\hat{R}(t+1)$ and $\hat{D}(t+1)$ by \cref{algnext}.
\EndFor
\end{algorithmic}
\textbf{return} $\hat{\rho}(t),\hat{\gamma}(t),\hat{w}(t),\hat{v}(t)$ and $\hat{\beta}(t),$ and $\hat{S}(t+1),\hat{A}(t+1),\hat{I}(t+1),\hat{V}(t+1),\hat{R}(t+1)$ and $\hat{D}(t+1)$ for $t=T-1,T,\cdots,T+d-1.$
\end{algorithm}

\section{Probability Distributions of the Final Size and the Maximum Size}\label{secfin}\indent\par
To estimate how large the epidemic caused by a single infected person is in the long run, we calculate the probability distributions of the final size and the maximum size. The final size is defined to be the cumulative number of the infections until the end of epidemic and the maximum size is defined to be the maximum number of the simultaneous infections during the prevalence of the epidemic. We propose a simplified stochastic SARV model whose four compartments are defined in \cref{intro}. We do not consider the demography and assume that there will not be another outbreak, that 
each asymptomatic infectious case transmits the disease to others according to a Poisson process with parameter $\lambda$, that when a susceptible is infected, it enters $A$, that $A$ enters $R$ (isolated, dead or recovered) after the period $T$ exponentially distributed with parameter $\alpha$ and that $\theta$ is the average ratio of the number of vaccinations to the number of jumps \cite{greenwood2009stochastic}.

The reasons why we treat the isolated cases as the compartment $R$ are that they cannot infect others and that this model does not require huge calculations.

Let $N$ be the total population. Then the process $\{(a(j),s(j))\}$ is a discrete-time Markov process on the state space $$X=\{(a,s):a+s\leq N\}.$$

The flow chart is shown as following.
\begin{center}
\begin{tikzpicture}[node distance=0.6cm,
every node/.style={fill=white, font=\sffamily}, align=center]
\node[process] (S) {$S$};
\node[process, right of=S,xshift=2cm] (A) {$A$};
\node[process, left of=S,xshift=-2cm] (V) {$V$};
\node[process, right of=A,xshift=2cm] (R) {$R$};
\draw[->](S)--node{$\dfrac{\lambda AS}{N}$}(A);
\draw[->](S)--node{$\theta S$}(V);
\draw[->](A)--node{$\alpha A$}(R);
\end{tikzpicture}
\end{center}

The basic reproduction number is then defined to be $$R_0^{\mathrm{SARV}}=\dfrac{\lambda(1-\theta)}{\alpha}.$$

The parameters under these assumptions are time-independent. However, just as in \cref{sectime}, the parameters can also be time-dependent. In \cref{final}, we will calculate the probability distributions of the final size and the maximum size with time-independent parameters, whose calculation is based on \cite{greenwood2009stochastic} and \cite{icslier2020exact}. In \cref{secmartime}, we assume the time-dependence of parameters and apply the parameters from the first forecast in \cref{sectime} to approximate these two distributions and compare them with time-independent model.

\subsection{Distributions with Time-Independent Parameters}\label{final}\indent\par
Firstly, we can calculate the transition probabilities as following:
$$\mathbb{P}((A(j+1),S(j+1))=(a+1,s-1)|(A(j),S(j))=(a,s))=\dfrac{\lambda s(1-j\theta)}{\lambda s(1-j\theta)+N\alpha},$$
$$\mathbb{P}((A(j+1),S(j+1))=(a-1,s)|(A(j),S(j))=(a,s))=\dfrac{N\alpha}{\lambda s(1-j\theta)+N\alpha}.$$

Let $\tau=\inf\{j\geq0:A(j)=0\}$
and $$P_m(a,s)=\mathbb{P}((A(\tau),S(\tau))=(0,N-m)|(A(0),S(0))=(a,s)),$$which gives the probability that the final cumulative number of infections is $m$ with initial state $(a,s).$ Then what we seek is the final size distribution $P_m(1,N-1)$ for $m=0,1,\cdots,N.$

We firstly set up the boundary conditions in \cref{eq7}:
\begin{equation}\label{eq7}
\begin{cases}
    P_m(a,s)=0 \mbox{ for } s=0,1,\cdots,N-m-1, \mbox{ and } a=0,1,\cdots,N-s.\\
    P_m(0,N-m)=1.
\end{cases}
\end{equation}

The concept behind these is very simple: if the number of susceptible people is less than $N-m,$ the cumulative number of infections must be greater than $m$ and so with probability $0$ the epidemic ends with size $m.$ If there is no infectious people and $N-m$ susceptible people, then the epidemic must stop with size $m.$

For $s=1,2,\cdots,N$ and $a=1,2,\cdots,N-s$, by conditioning on the first step \cite{durrett1999essentials}, we have
\begin{equation}\label{eq8}
    P_m(a,s)=\dfrac{\lambda s(1-\theta)}{\lambda s(1-\theta)+N\alpha}P_m(a+1,s-1)+\dfrac{N\alpha}{\lambda s(1-\theta)+N\alpha}P_m(a-1,s).
\end{equation}
The iteration can be explained as following: $\dfrac{\lambda s(1-\theta)}{\lambda s(1-\theta)+N\alpha}$ and $\dfrac{N\alpha}{\lambda s(1-\theta)+N\alpha}$ are the probabilities of infection and recovery in a step, respectively. The former multiplying $P_m(a+1,s-1)$ is the probability that when infection occurs in the first step, the epidemic ends with size $m;$ the latter multiplying $P_m(a-1,s)$ is the probability that when recovery occurs in the first step, the epidemic ends with size $m.$ $P_m(a,s),$ the probability that the cumulative number of infections is $m$ with initial state $(a,s),$ equals the sum of these two probabilities.

Similarly as the final size distribution, we can calculate the maximum size distribution.
 Let $$Q_m(a,s)=\mathbb{P}(\sup\limits_{j\in[0,\tau]}A(j)=m|(A(0),S(0))=(a,s))$$ for $m=1,2,\cdots,N.$ Then what we seek is the maximum size distribution $Q_m(1,N-1)$ for $m=1,2,\cdots,N$ and we have the boundary conditions and the iteration formula as in \cref{eq9}.
\begin{equation}\label{eq9}
    \begin{cases}
        Q_m(m,0)=1.\\
        Q_m(a,0)=0 \mbox{ for } a\neq m.\\
        Q_0(0,s)=1.\\
        Q_m(0,s)=0 \mbox{ for } m>0.\\
        Q_m(a,s)=0 \mbox{ for } m<a \mbox{ or } a+s<m.\\
        Q_m(a,s)=\dfrac{\lambda s(1-\theta)}{\lambda s(1-\theta)+N\alpha}Q_m(a+1,s-1)\\\hspace*{12em}+\dfrac{N\alpha}{\lambda s(1-\theta)+N\alpha}Q_m(a-1,s),\mbox{  otherwise.}
    \end{cases}
\end{equation}

The iteration can be explained in a similar way of that of the final size distribution. Since when $s=0$ or $a=0,$ there will no more infection occurs. It follows directly by the first to fourth conditions. If $m<a,$ then the initial state contains the epidemic size larger than $m;$ if $a+s<m,$ then the epidemic size must be no more than $m$ afterwards and so the fifth condition follows.

\subsection{Distributions with Time-Varying Parameters}\label{secmartime}
\indent\par As in \cref{sectime}, the parameters can be time-dependent, so we can use the predicted time-varying rates to calculate the distributions of final size and maximum size. To adapt the simplified model, let
\begin{equation}\label{eqtilde}
    \tilde{\lambda}(t)=\beta(t),\tilde{\alpha}(t)=w(t)\mbox{ and }\tilde{\theta}(t)=v(t),
\end{equation} where $t$ is the count of predicted days. Let $j$ be the step in which $(a,s)$ lies in starting from $(1,N-1).$ 

It is not reasonable to assume that the process proceeds one step in one day, so we assume that there are $r$ steps in one day. That is, there are $r$ $j$'s corresponding to a same $t$. 
Then the step $j$ lies in the $t_j:=\lfloor\dfrac{j}{r}\rfloor$-th day and so we can set \begin{equation}\label{eqtj}
    \lambda(j)=\dfrac{\tilde{\lambda}(t_j)}{r},\alpha(j)=\dfrac{\tilde{\alpha}(t_j)}{r}\mbox{ and }\theta(j)=\dfrac{\tilde{\theta}(t_j)}{r}.
\end{equation} 

With the same boundary conditions as in \cref{eq7}, we rewrite \cref{eq8} as \begin{equation}\label{eq10}
\begin{aligned}
     P_m(a,s)=&\dfrac{\lambda(j)s(1-\theta(j))}{\lambda(j)s(1-\theta(j))+N\alpha(j)}P_m(a+1,s-1)\\ &+\dfrac{N\alpha(j)}{\lambda(j) s(1-\theta(j))+N\alpha(j)}P_m(a-1,s).
\end{aligned}
\end{equation}

The choice of $r$ must satisfy that $t_j\leq T,$ the total number of predicted days.
\par Note that every step moves a state by either $(-1,0)$ or $(1,-1).$ For a state $(a,s)$ with $s=1,2,\cdots,N$ and $a=1,2,\cdots,N-s,$ write $(a,s)=(1,N-1)+(x,-y),$ that is, $(1,N-1)$ is moved to $(a,s)$ by $(x,-y).$ In the $j$-th step, $(x,-y)=(-j+2y,-y).$ Then $j=2y-x=2(N-1-s)-(a-1).$ 

Similarly for the maximum size distribution, we set the same boundary conditions as in \cref{eq9} and rewrite the iteration as \begin{equation}\label{eq11}\begin{aligned}
     Q_m(a,s)=&\dfrac{\lambda(j)s(1-\theta(j))}{\lambda(j)s(1-\theta(j))+N\alpha(j)}Q_m(a+1,s-1)\\ &+\dfrac{N\alpha(j)}{\lambda(j) s(1-\theta(j))+N\alpha(j)}Q_m(a-1,s).
\end{aligned}
\end{equation}

\Cref{algfinal} computes 
the final size and the maximum size distributions with time-dependent parameters. The first and third for loops set up the boundary conditions of $P_m$ and $Q_m.$ Then we compute $P_m$ iteratively in the second for loop. The fourth for loop computes $Q_m$ iteratively \cite{icslier2020exact}.
\begin{algorithm}
\caption{Final Size and Maximum Size Distributions with Time-Dependent Parameters}
\label{algfinal}
\textbf{Input}: Total population $N,$ the number $r$ of steps in one day and predicted parameters $\tilde{\lambda}(t),\tilde{\alpha}(t)$ and $\tilde{\theta}(t)$ for $t=0,1,\cdots,T-1.$
\begin{algorithmic}
\For{$m\leftarrow0$ \textbf{to} $N$}
\State $P_m=O_{(N+1)\times(N+1)}.$
\State $P_m(0,N-m)=1.$
\State $Q_m=O_{(N+1)\times(N+1)}.$
\EndFor
\For{$s\leftarrow1$ \textbf{to} $N$}
\For{$a\leftarrow1$ \textbf{to} $N-s$}
\State $t_j\leftarrow\lfloor\dfrac{2(N-1-s)-(a-1)}{r}\rfloor.$
\For{$m\leftarrow0$ \textbf{to} $N$}
\State Calculate $P_m(a,s)$ by \cref{eqtilde}, \cref{eqtj} and \cref{eq10}.
\EndFor
\EndFor
\EndFor
\For{$a\leftarrow0$ \textbf{to} $N$}
\State $Q_a(a,0)=1.$
\EndFor
\For{$s\leftarrow1$ \textbf{to} $N-1$}
\State $Q_0(0,s)=1.$
\For{$a\leftarrow1$ \textbf{to} $N-s$}
\State $t_j\leftarrow\lfloor\dfrac{2(N-1-s)-(a-1)}{r}\rfloor.$
\For{$m\leftarrow a+1$ \textbf{to} $a+s$}
\State Calculate $Q_m(a,s)$ by \cref{eqtilde}, \cref{eqtj} and \cref{eq11}.
\EndFor
\State $Q_a(a,s)=1-\sum_{m=a+1}^{a+s}Q_m(a,s).$
\EndFor
\EndFor
\end{algorithmic}
\textbf{return} $P=(P_0(1,N-1),P_1(1,N-1),\cdots,P_N(1,N-1))$ and $Q=(Q_0(1,N-1),Q_1(1,N-1),\cdots,Q_N(1,N-1))$.
\end{algorithm}

\subsection{Extinction Probability in a Large Population}\indent\par
In this subsection, we discuss the extinction probability with one initial
infection in an infinite (or very large) population with exponential distributed infectious period related to the basic reproduction number $R^{\mathrm{SARV}}_0$ in our stochastic SARV model. We will also see the relation between the extinction probability and the final size distribution.

The following calculation is based on the method proposed in \cite{andersson2012stochastic}. \cite{icslier2020exact} also did this discussion in their proposed $SIkR$ model to approximate the outbreak probability, which is complementary to the extinction probability.

Let $X$ be a random variable of number of infections caused by a single infectious person during exponential distributed infectious period. Let $T$ be the random variable of the infectious period. Since $\dfrac{\lambda Ts(1-\theta)}{N}=\dfrac{\lambda T(N-a)(1-\theta)}{N}\rightarrow\lambda T(1-\theta)$ as $N\to\infty,$ we write the probability generating function of $X$ as $\mathbb{E}(u^X)=\mathbb{E}(e^{-\lambda T(1-\theta)(1-u)}).$ Using the moment generating function $\mathbb{E}(e^{tT})=\dfrac{\alpha}{\alpha-t}$ of $T,$ we have $$\mathbb{E}(u^X)=\dfrac{\alpha}{\alpha+\lambda(1-\theta)(1-u)}.$$ Then the extinction probability $u$ of the Markov branching process is the smaller solution of the fixed point equation $\dfrac{\alpha}{\alpha+\lambda(1-\theta)(1-u)}=u,$ which gives $$u=\min\{1,\dfrac{\alpha}{\lambda(1-\theta)}\}=\min\{1,\dfrac{1}{R_0^{\mathrm{SARV}}}\}.$$

For the relation with the final size distribution, we do the following calculation considering the method proposed by \cite{miller2018primer}. Let $\Omega_g(z)=\displaystyle\sum_{j}\omega_j(g)z^j$ be the probability generating function for the distribution of the number of recoveries at step $g$. Let $\Omega_\infty(z)=\lim\limits_{g\rightarrow\infty}\Omega_g(z)$ be the pointwise limit. Let $\omega_\infty$ be the probability that the cumulative number of the infections is infinite and define $z^\infty=1$ as $z=1$ and $z^\infty=0$ as $z\in[0,1)$. Then we can express $\Omega_\infty(z)$ by $\Omega_\infty(z)=\displaystyle\sum_{r<\infty}\omega_rz^r+\omega_\infty z^\infty.$ Let $z\to 1-$ and we have $\displaystyle\sum_{r<\infty}\omega_r=1-\omega_\infty,$ which gives the extinction probability. Therefore, this probability is exactly the cumulative probability of the final size which is finite in an infinite population. In order to demonstrate this concept graphically, we will use a small population $N=1000$ with time-independent parameters in \cref{dis} to observe the plateau value in the cumulative probability distribution of the final size, which is approximately the value of the extinction probability $u$.

\section{Numerical Results}\label{secnum}
\subsection{Data Setting}\label{data}\indent\par
For the first forecast, we track the data for $40$ days in the USA from \cite{WEBSITE:4}, \cite{WEBSITE:9} and \cite{WEBSITE:5} from June 30-th, 2022 to August 8-th, 2022 to predict the data on next 10 days, i.e., August 9-th, 2022 to August 18-th, 2022, and estimate the relative errors.

$L_J$ is taken to be $2$. $U_J$ and $\ell_T$ are taken to be $35$. $\sigma$ is taken (underestimate) to be $0.5$ \cite{WEBSITE:3} and $p$ is taken to be $\dfrac{40}{100-40}=\dfrac{2}{3}$ \cite{ma2021global}. For demography, it was estimated in \cite{WEBSITE:4} that the total population in the USA is $N_0=338,279,857$, that the number of new births per day in the USA is about $B=10,800$ and that death rate per day is $\mu=\dfrac{7855}{N_0}$.

Unlike the FIR filters that can predict the real-world quantities, the calculation of the desired probability distributions is too large to apply the real data. Hence, we only apply the estimated rates from the short-term prediction for $20$ days to small populations $N=100,N=500$ and $N=1000$ to demonstrate the use of calculation of the desired distributions. Finally, we consider the rates to be time-dependent. We assume that the epidemic ends within $20$ days and set $r=10,r=50$ and $r=100,$ respectively. We will see that the distributions are similar between different populations and so we can expect that with the same parameters and in a large population, the ``ratios'' of the final size and the maximum size are distributed similarly.

\subsection{Prediction Using Time-Varying Parametric Models}\label{next}
\indent\par The predicted time-varying $v(t)$ is around $1.2423\times 10^{-4},\rho(t)$ is around $1.1445\times 10^{-4}$ and $\gamma(t)$ is around $0.0305.$ The predicted values of $\beta(t)$ and $w(t)$ vary drastically and they are presented in \cref{fig:beta}; 
\cref{fig:I} shows the real and predicted data of confirmed $I(t)$; 
the time-varying basic reproduction numbers from in the SAIVRD model are presented in \cref{fig:R0}.
\begin{figure}
    \centering
    \includegraphics[height=3.5cm]{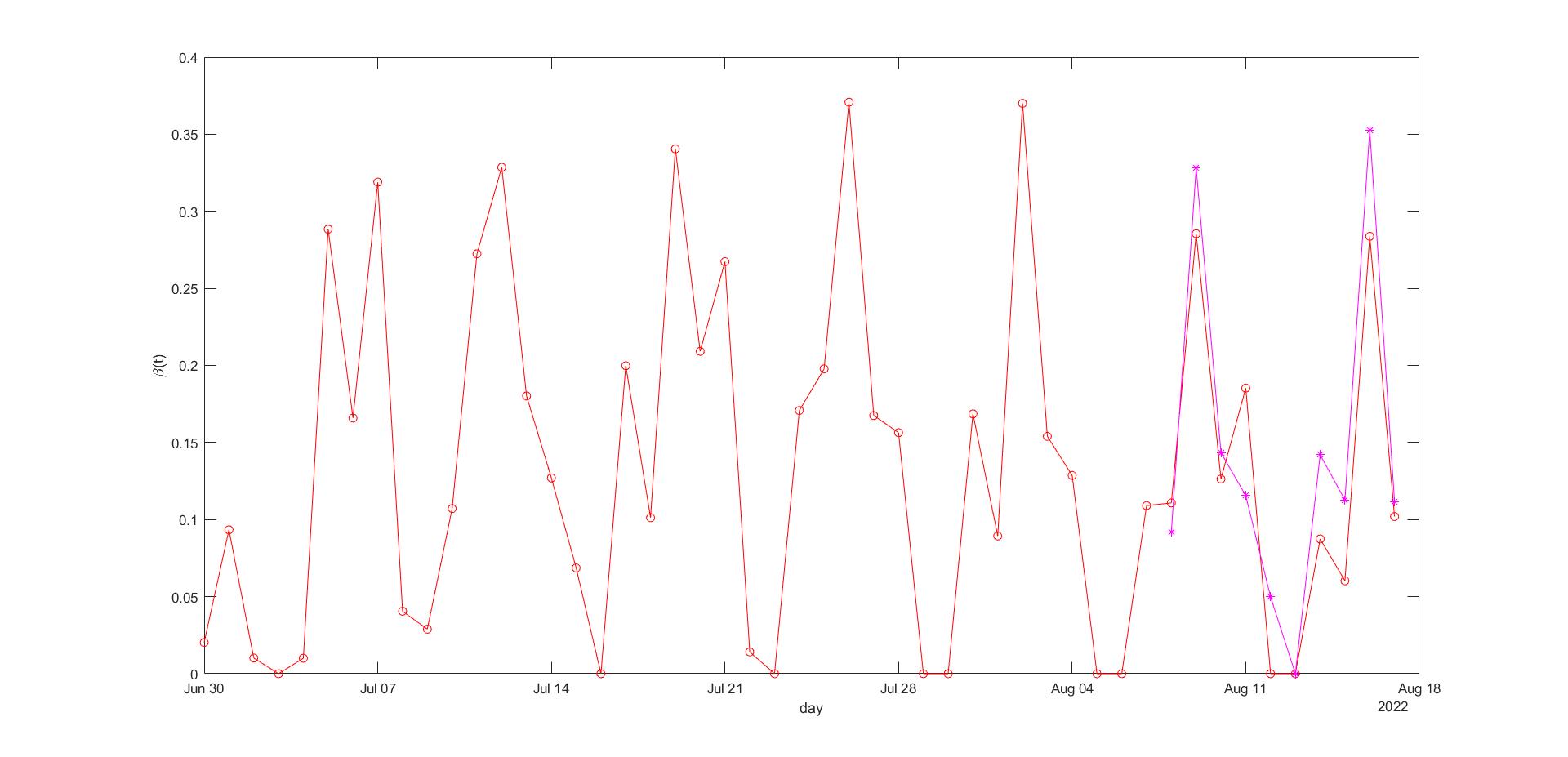}
    \includegraphics[height=3.5cm]{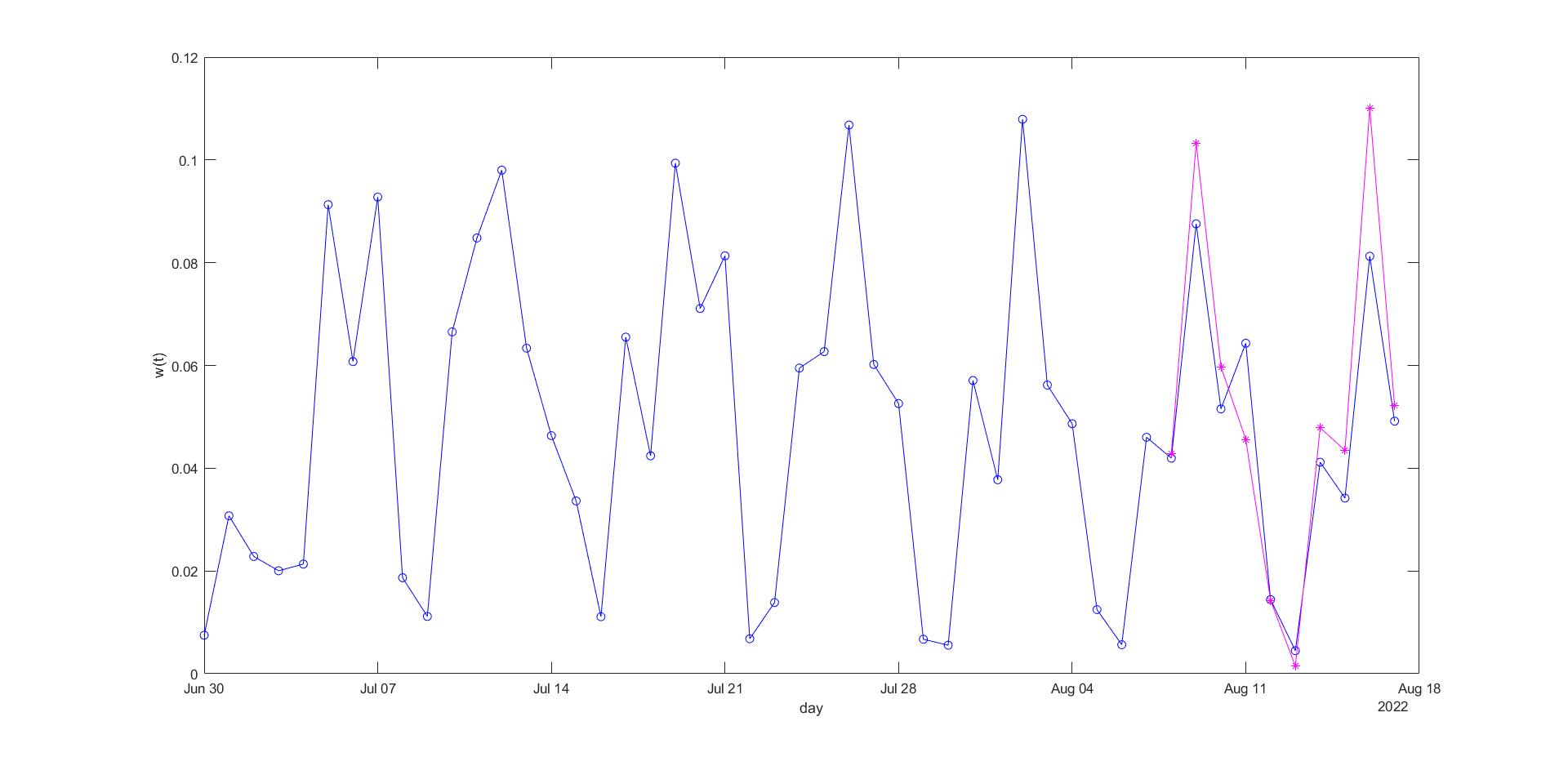}
    \caption{Predictions of $\beta(t)$ (left) and $w(t)$ (right). The hollow points represent the real data and the star-shaped points represent the predicted values.}
    \label{fig:beta}
\end{figure}

\begin{figure}[H]
    \centering
    \includegraphics[height=4cm]{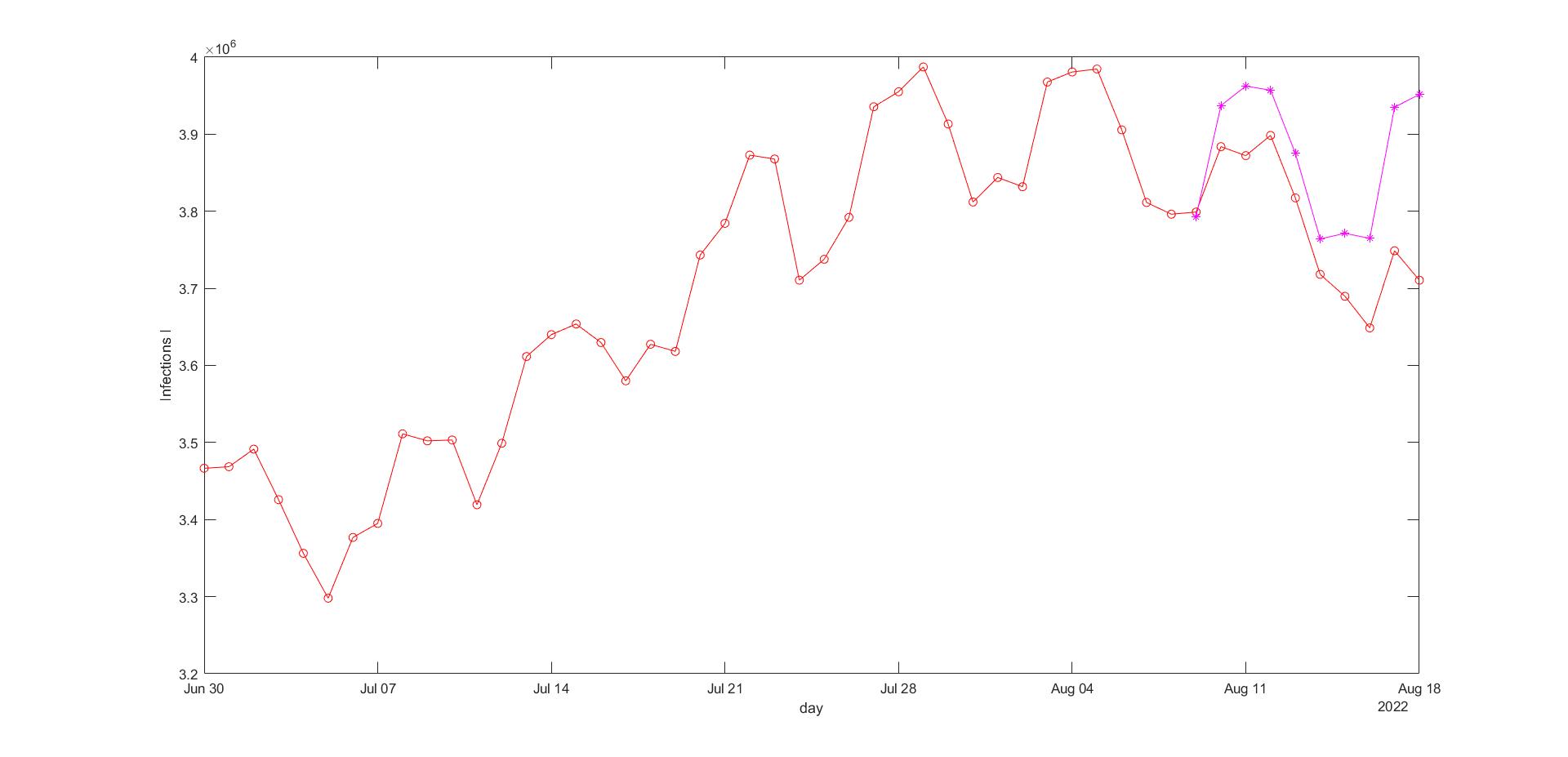}
    \caption{Prediction of $I(t)$. The hollow points represent the real data and the star-shaped points represent the predicted values.}
    \label{fig:I}
\end{figure}

\begin{figure}[H]
    \centering
    \includegraphics[height=4cm]{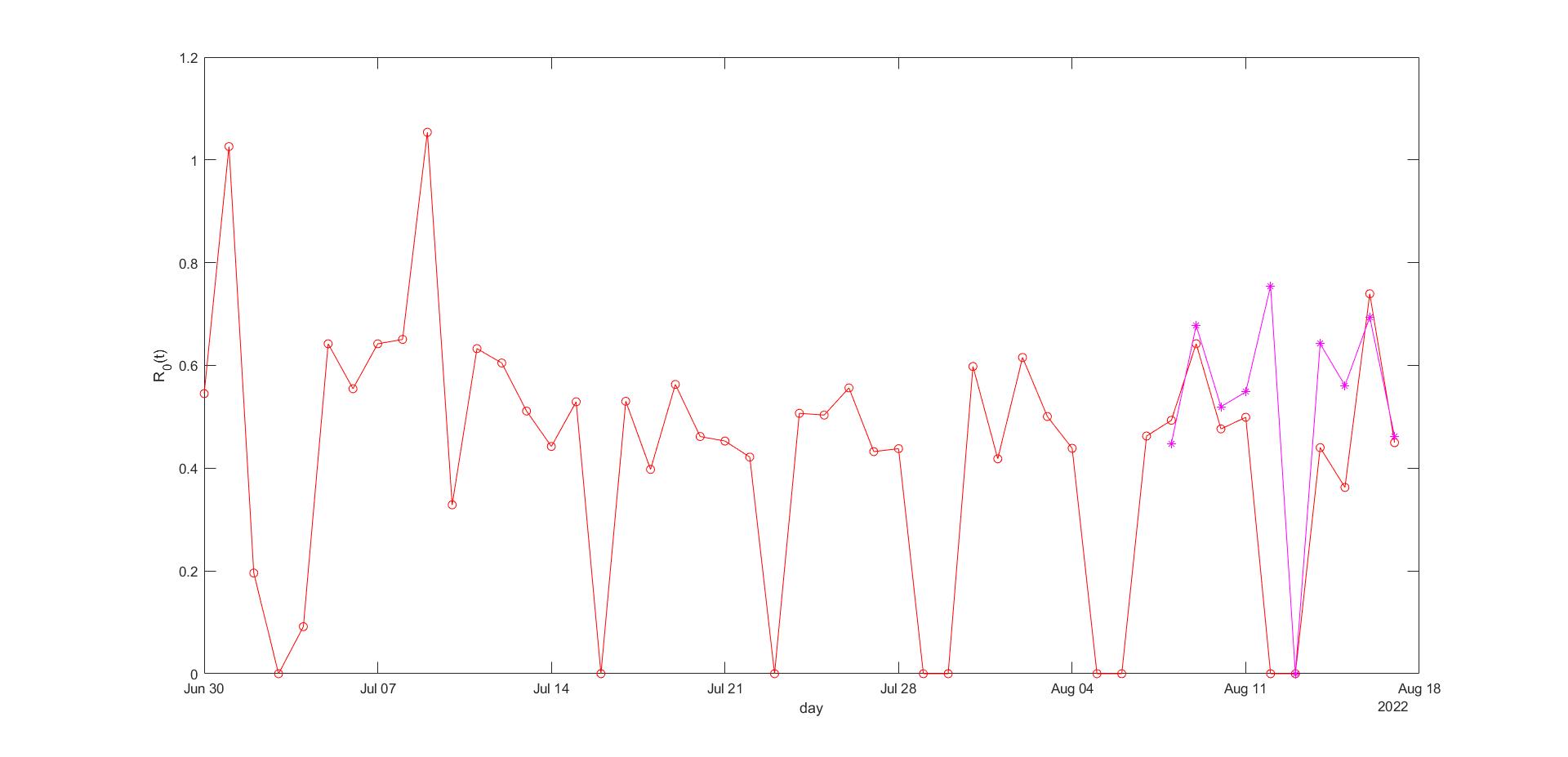}
    \caption{The Time-Varying Basic Reproduction Numbers $R_0(t)$. The hollow points represent the real data and the star-shaped points represent the predicted values.}
    \label{fig:R0}
\end{figure}

\subsection{Distributions With Time-Independent Parameters}\label{secnummartime}\indent\par
From the first forecast in \cref{sectime}, we also get the time-varying rates $\tilde{\lambda}(t),\tilde{\alpha}(t)$ and $\tilde{\theta}(t)$ from August 9-th, 2022 to August 28-th, 2022. Then we estimate $\lambda,\alpha$ and $\theta$ to be the average of the predicted values $\tilde{\lambda}(t),\tilde{\alpha}(t)$ and $\tilde{\theta}(t)$, respectively, namely, $1.3304\times10^{-3},4.5077\times 10^{-4}$ and $1.2460\times10^{-6}$, respectively. Using \cref{eq7}, \cref{eq8} and \cref{eq9}, we compute the probability distributions of the final size and the maximum size and present them in \cref{fig:final size} and \cref{fig:max size}.
\begin{figure}[H]
    \centering
    \includegraphics[height=3.5cm]{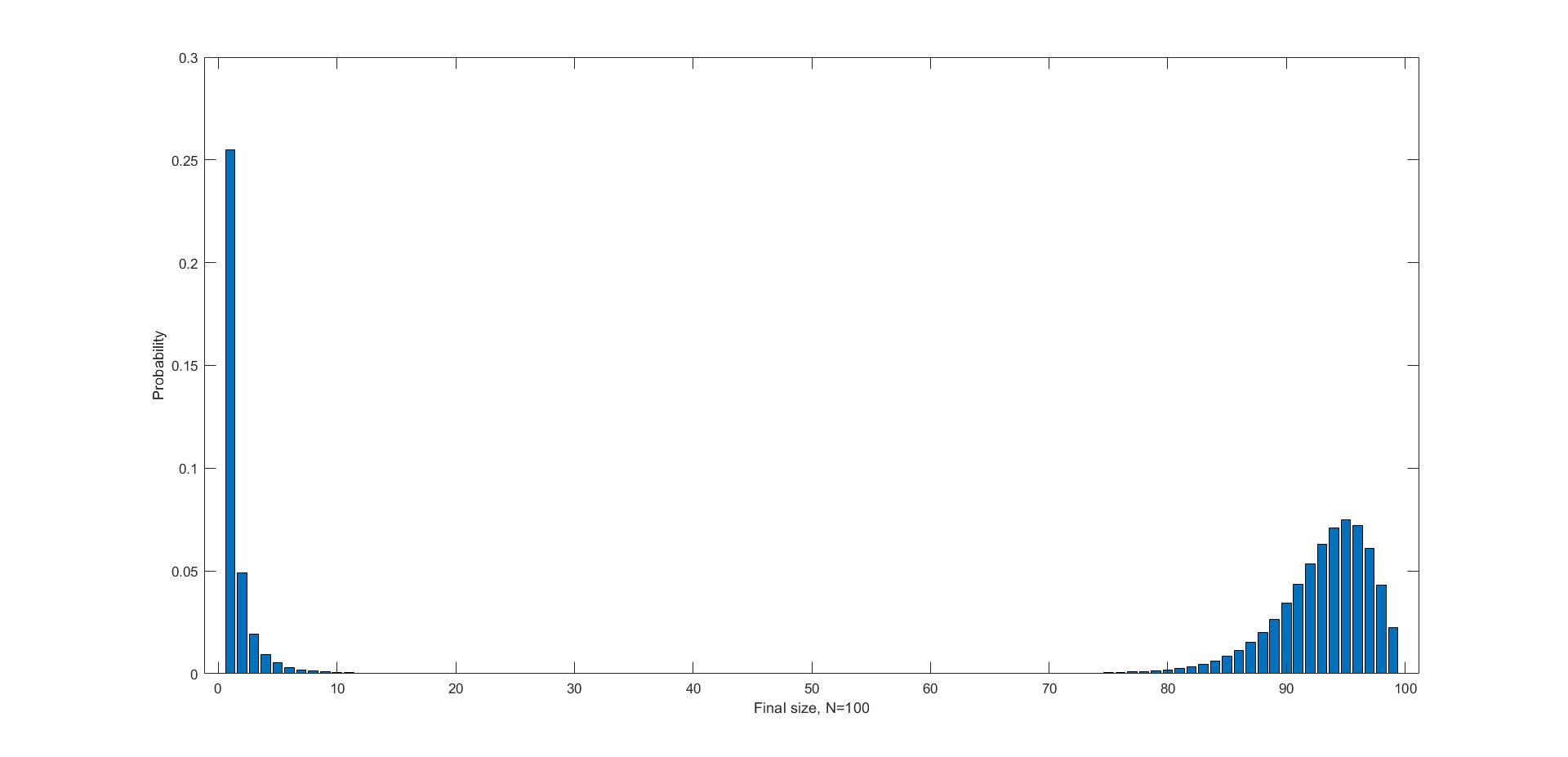}
    \includegraphics[height=3.5cm]{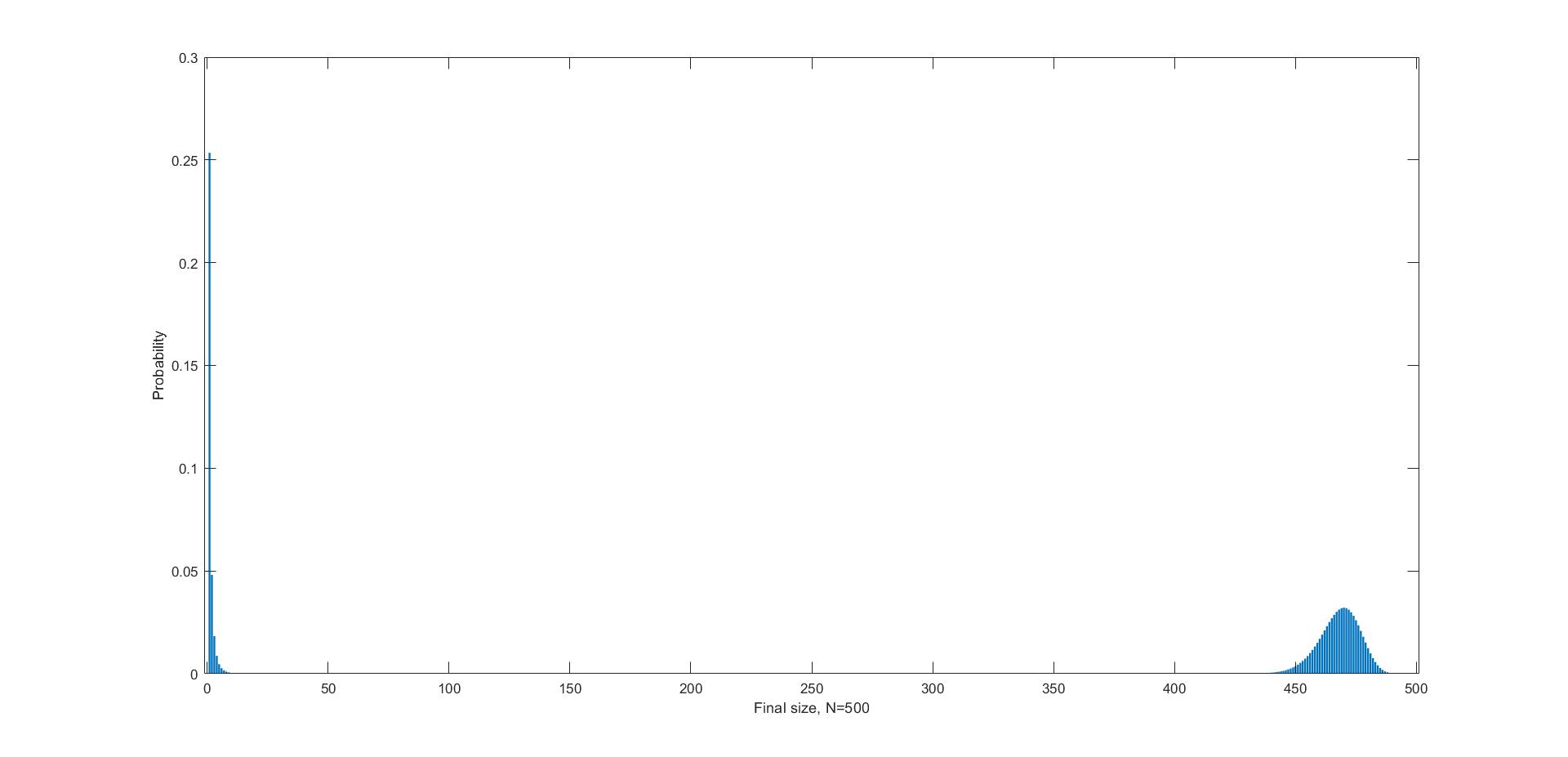}
    \includegraphics[height=3.5cm]{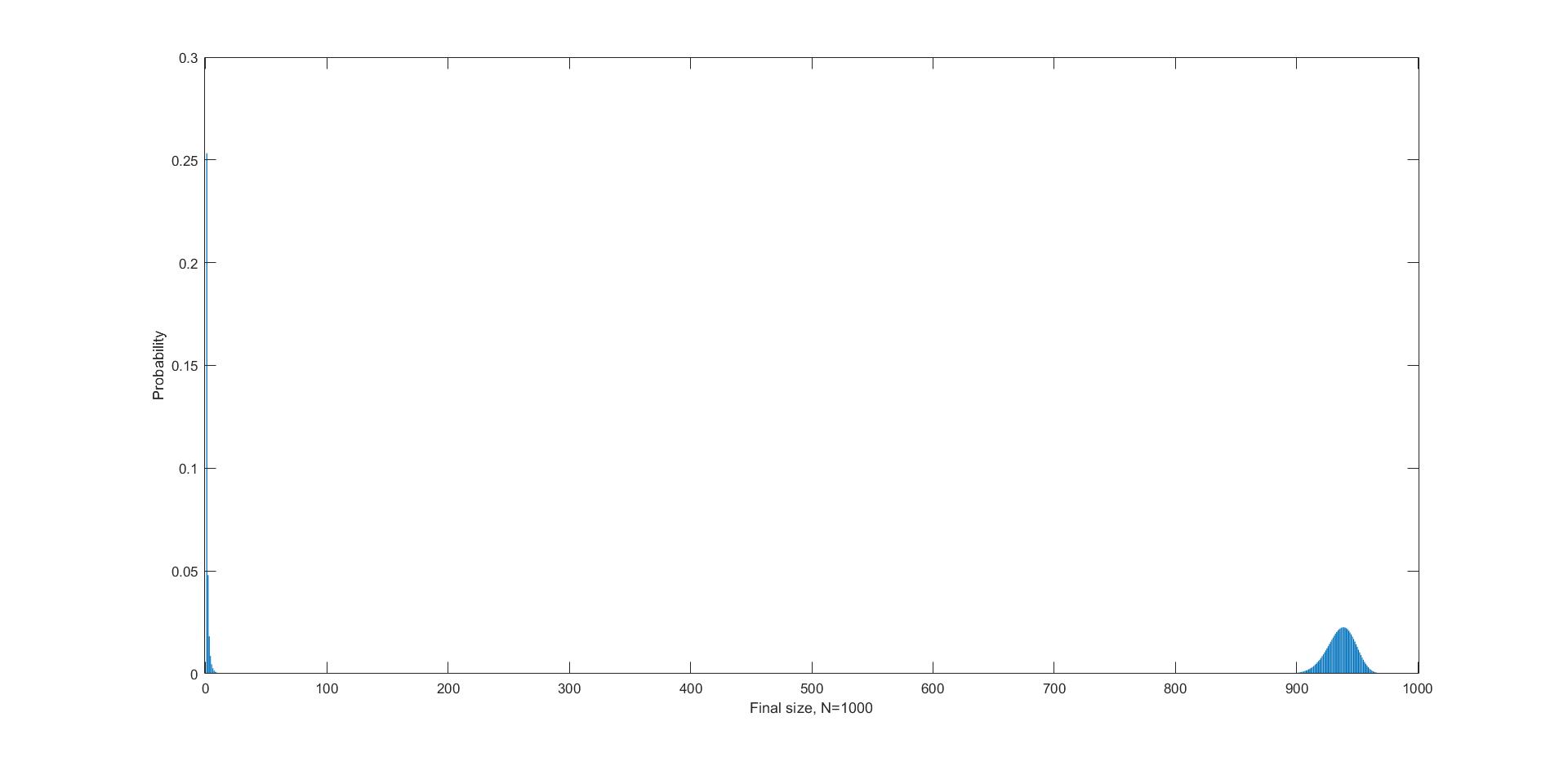}
    \caption{Final Size Distribution with Time-Independent Parameters as $N=100,N=500$ and $N=1000.$}
    \label{fig:final size}
\end{figure}
\begin{figure}[H]
    \centering
    \includegraphics[height=3.5cm]{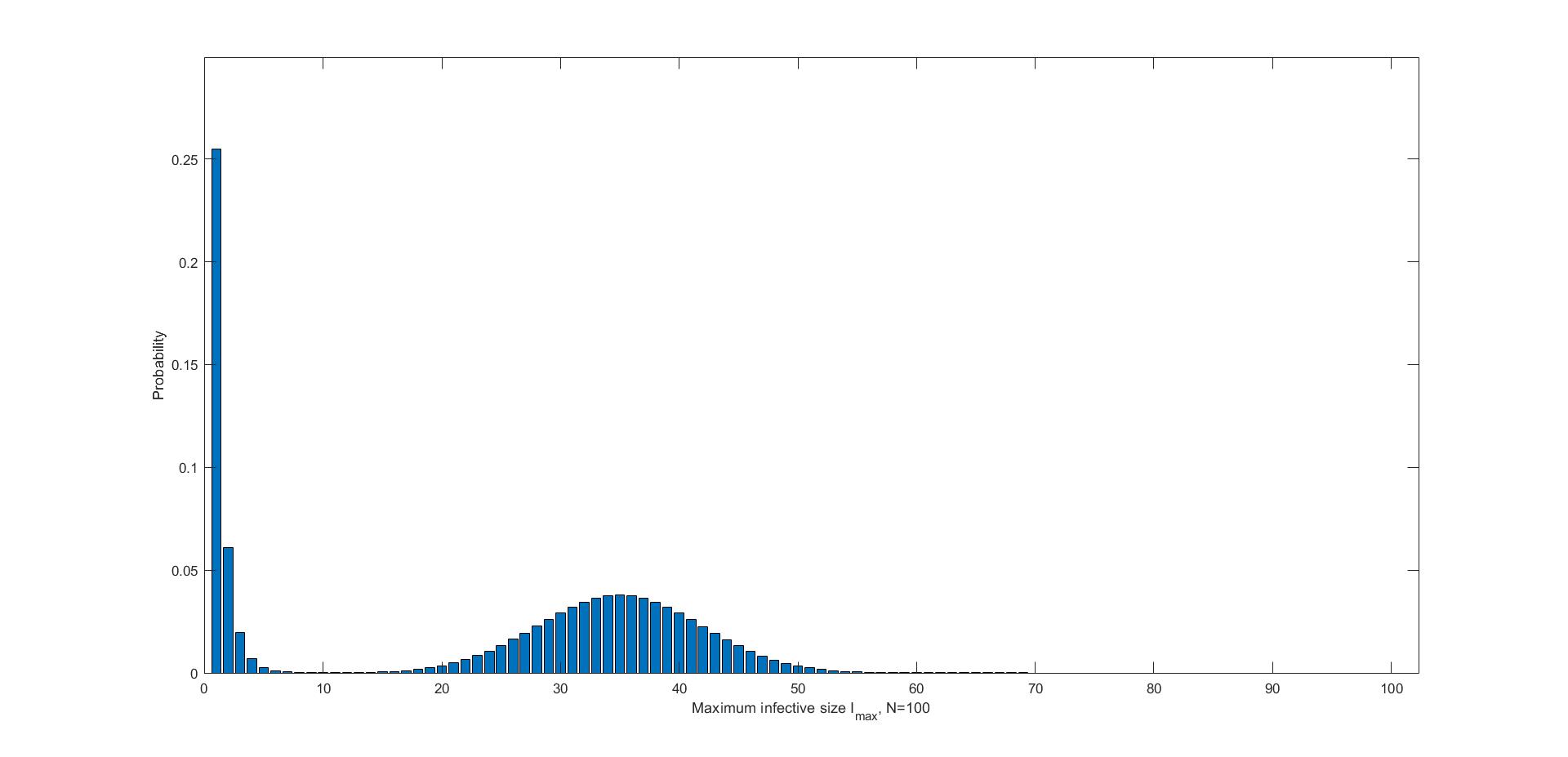}
    \includegraphics[height=3.5cm]{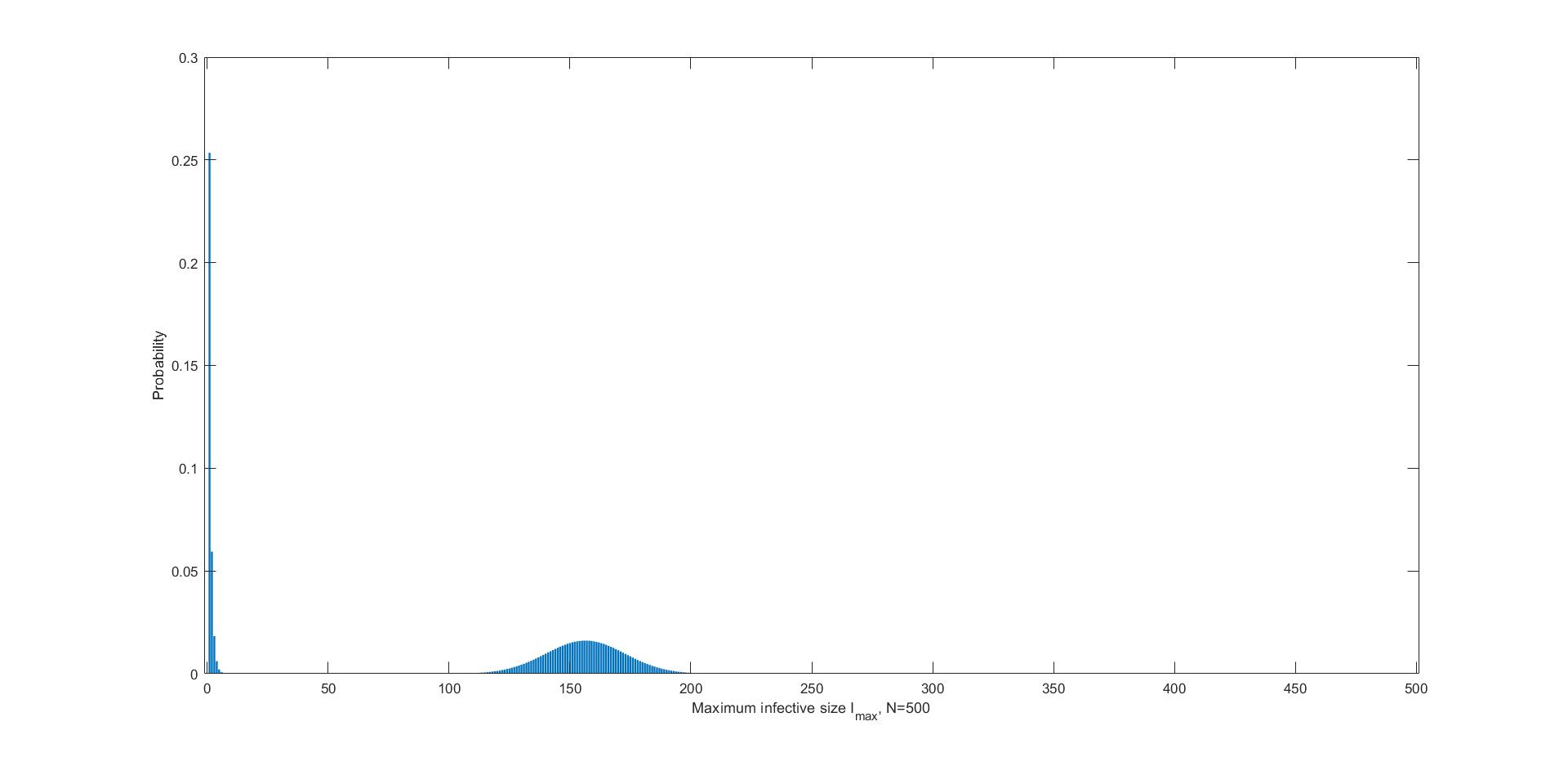}
    \includegraphics[height=3.5cm]{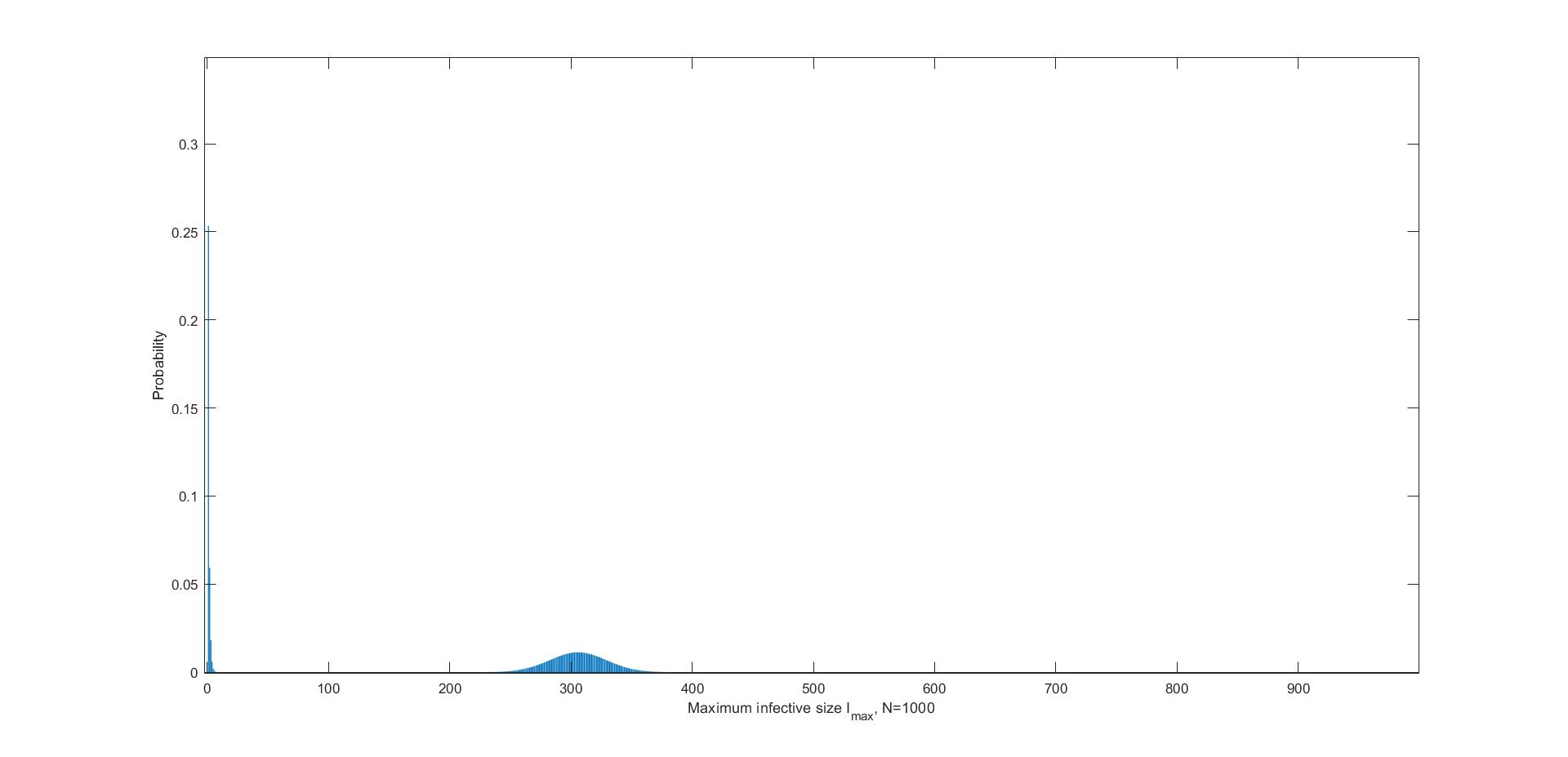}
    \caption{Maximum Size Distribution with Time-Independent Parameters as $N=100,N=500$ and $N=1000.$}
    \label{fig:max size}
\end{figure}

\subsection{Distributions With Time-Varying Parameters}
\indent\par
We next use the time-varying rates $\tilde{\lambda}(t),\tilde{\alpha}(t)$ and $\tilde{\theta}(t)$ from the \cref{next} and apply \cref{algfinal} to compute the probability distributions of the final size and the maximum size under the time-dependent assumption and present them in \cref{fig:timedep final size} and \cref{fig:timedep max size}.
\begin{figure}[H]
    \centering
    \includegraphics[height=3.5cm]{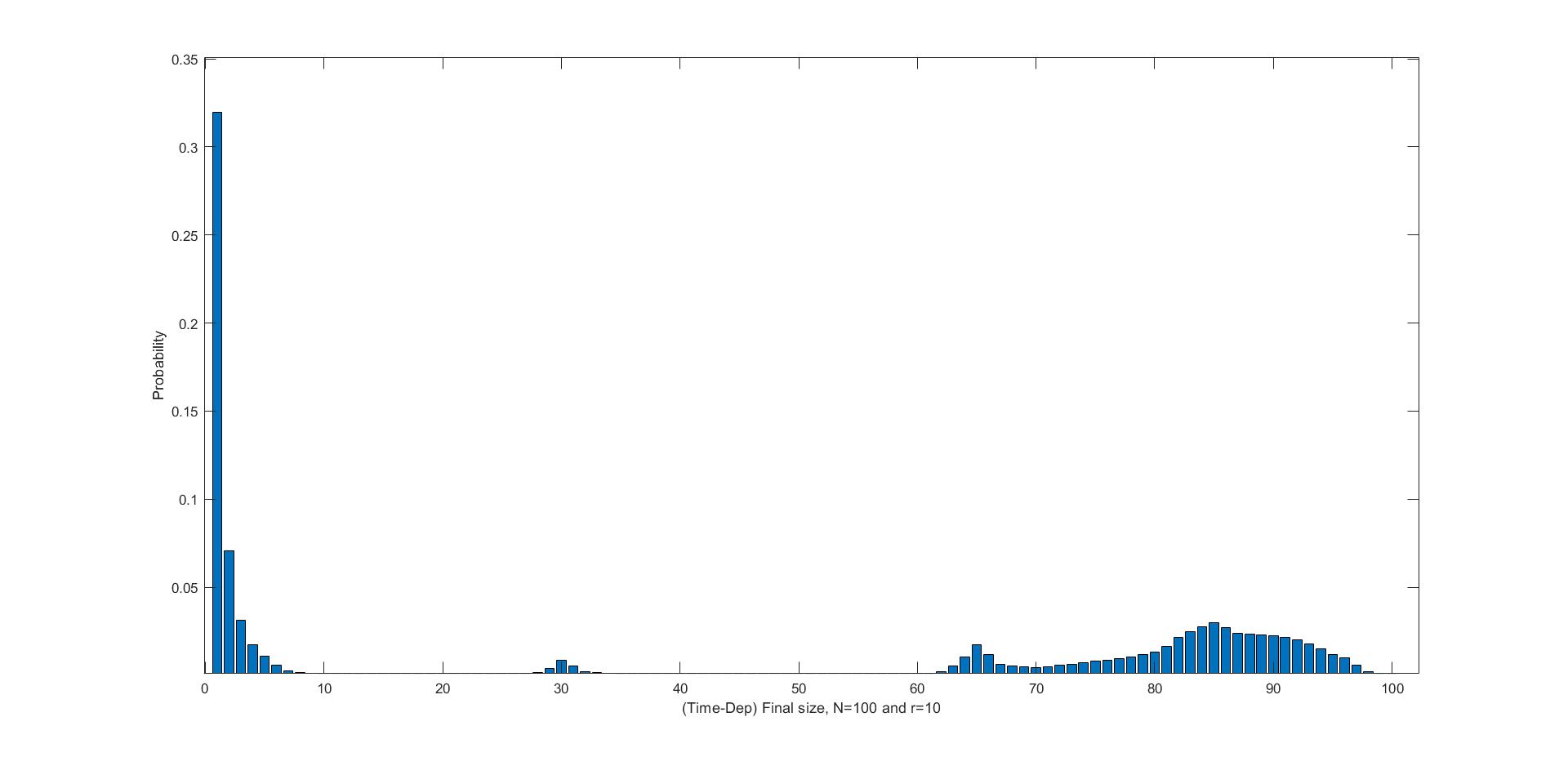}
    \includegraphics[height=3.5cm]{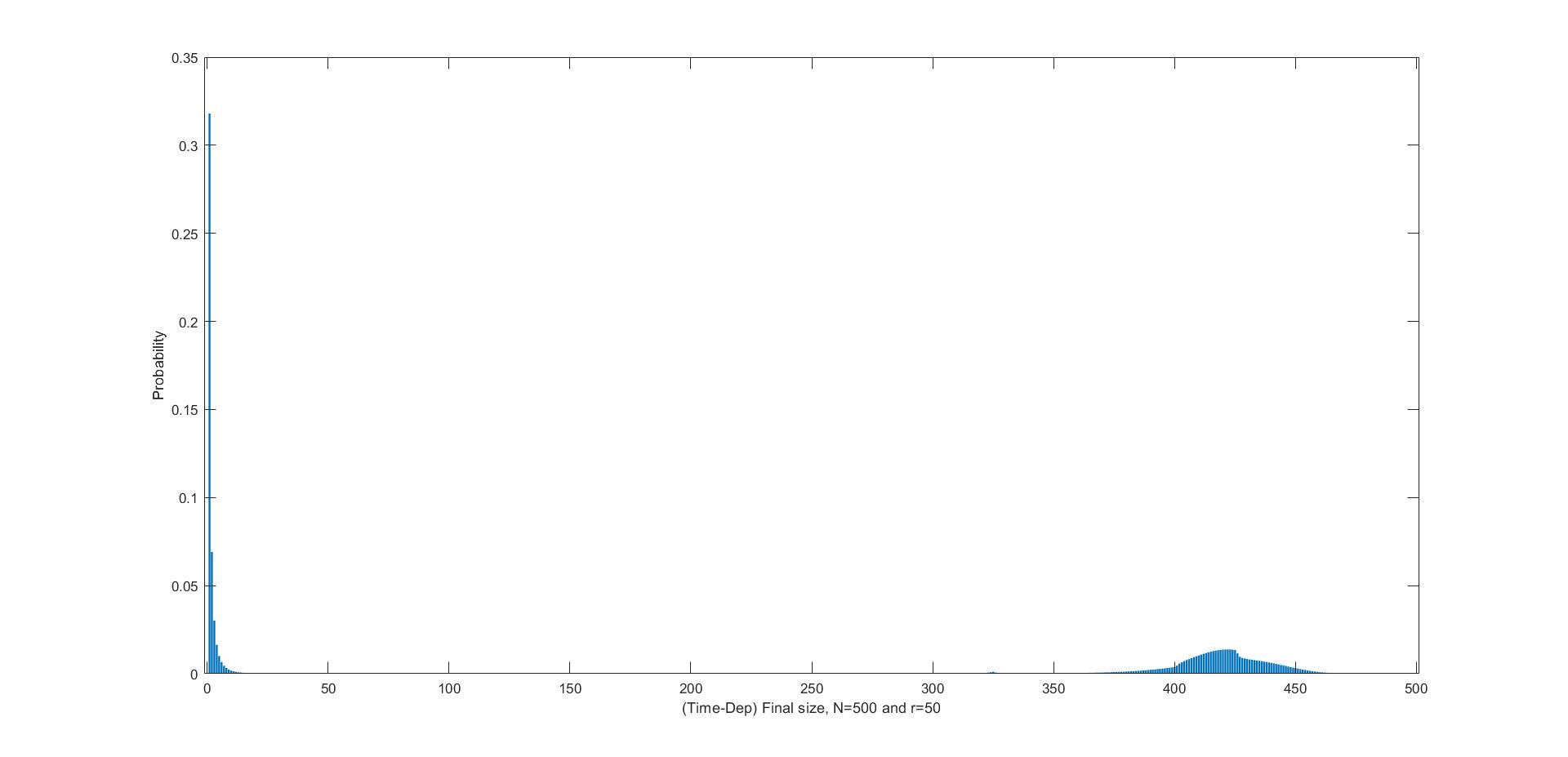}
    \includegraphics[height=3.5cm]{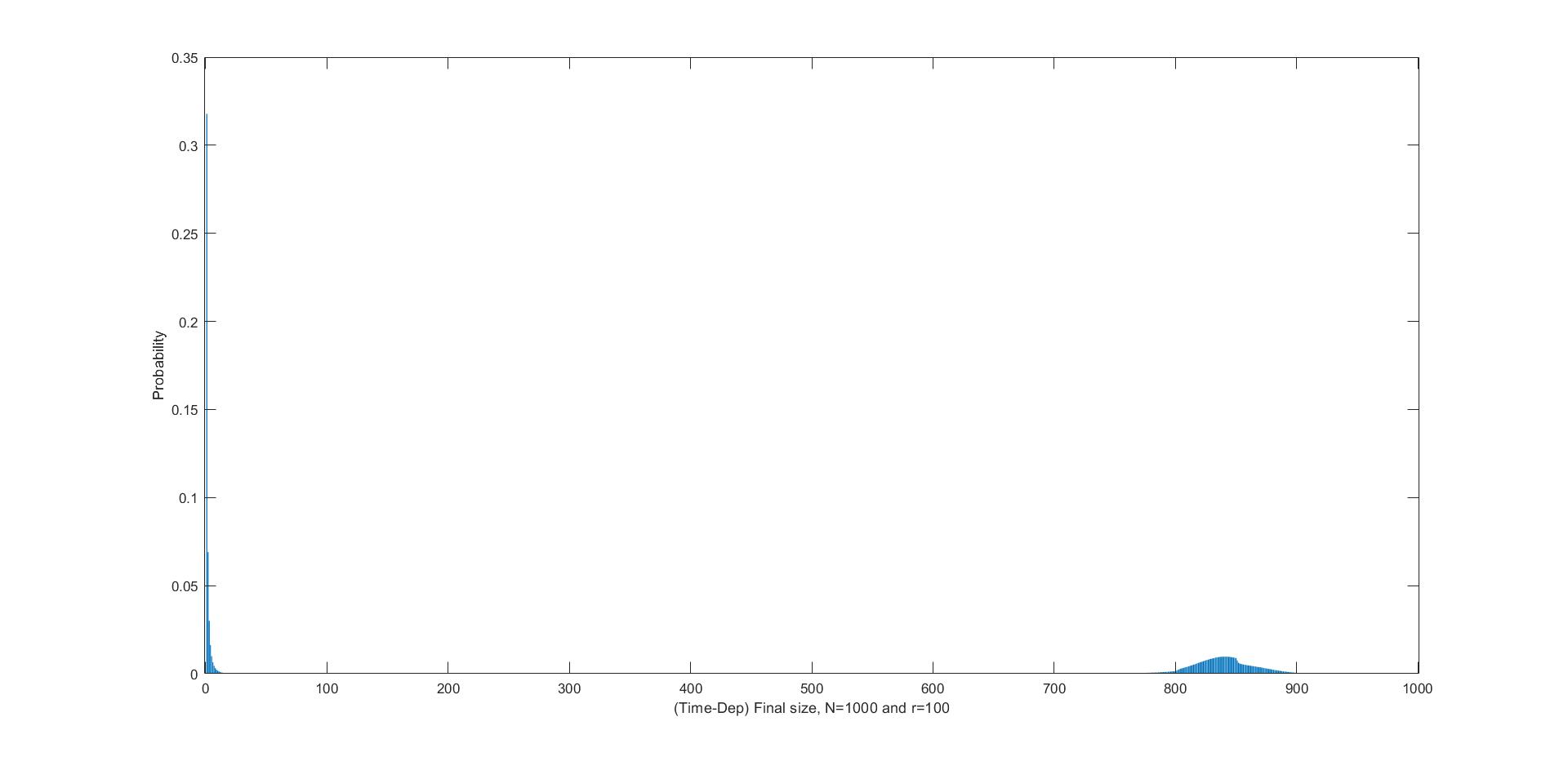}
    \caption{Final Size Distribution with Time-Dependent Parameters as $N=100,N=500$ and $N=1000.$}
    \label{fig:timedep final size}
\end{figure}
\begin{figure}[H]
    \centering
    \includegraphics[height=3.5cm]{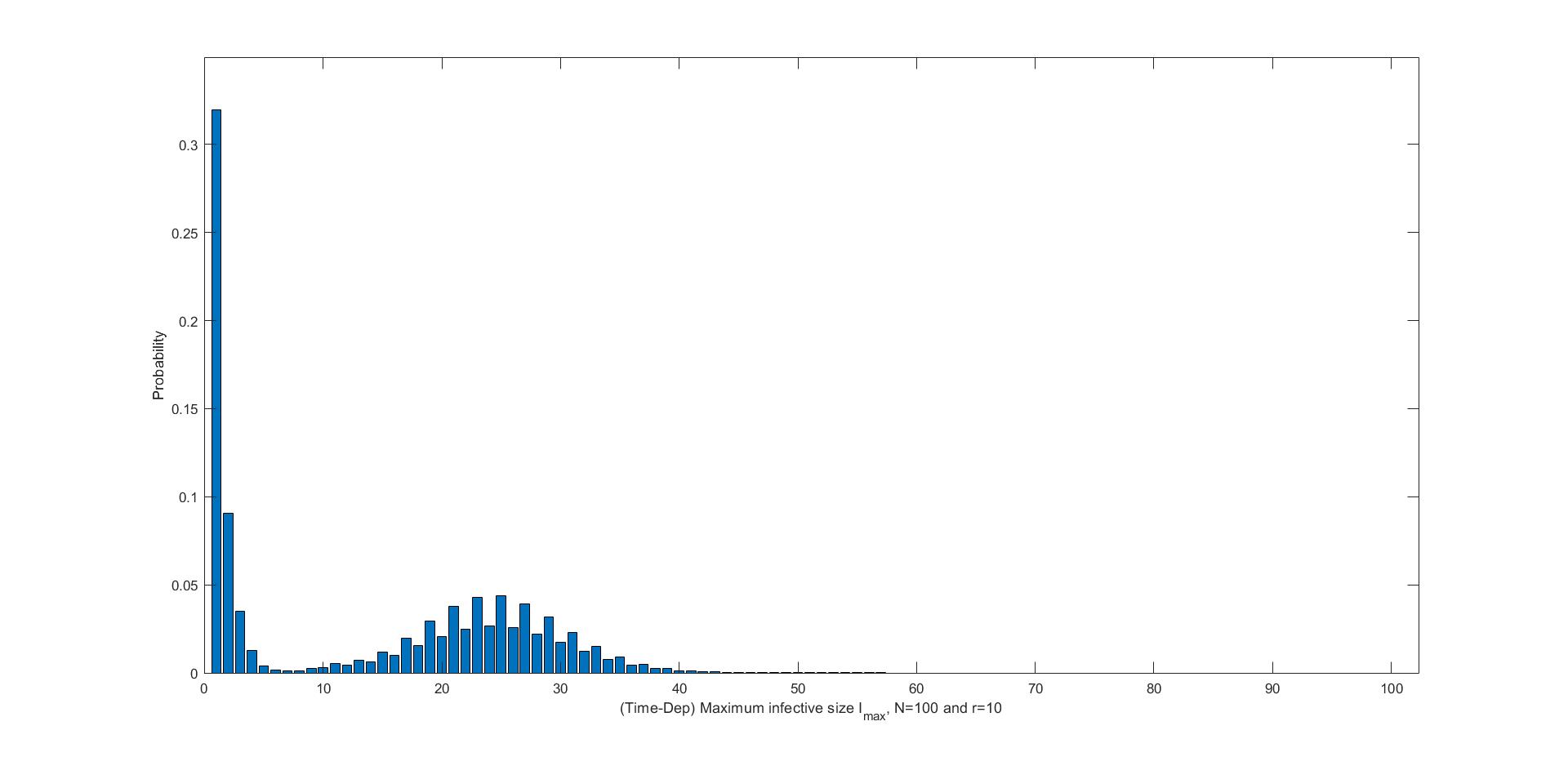}
    \includegraphics[height=3.5cm]{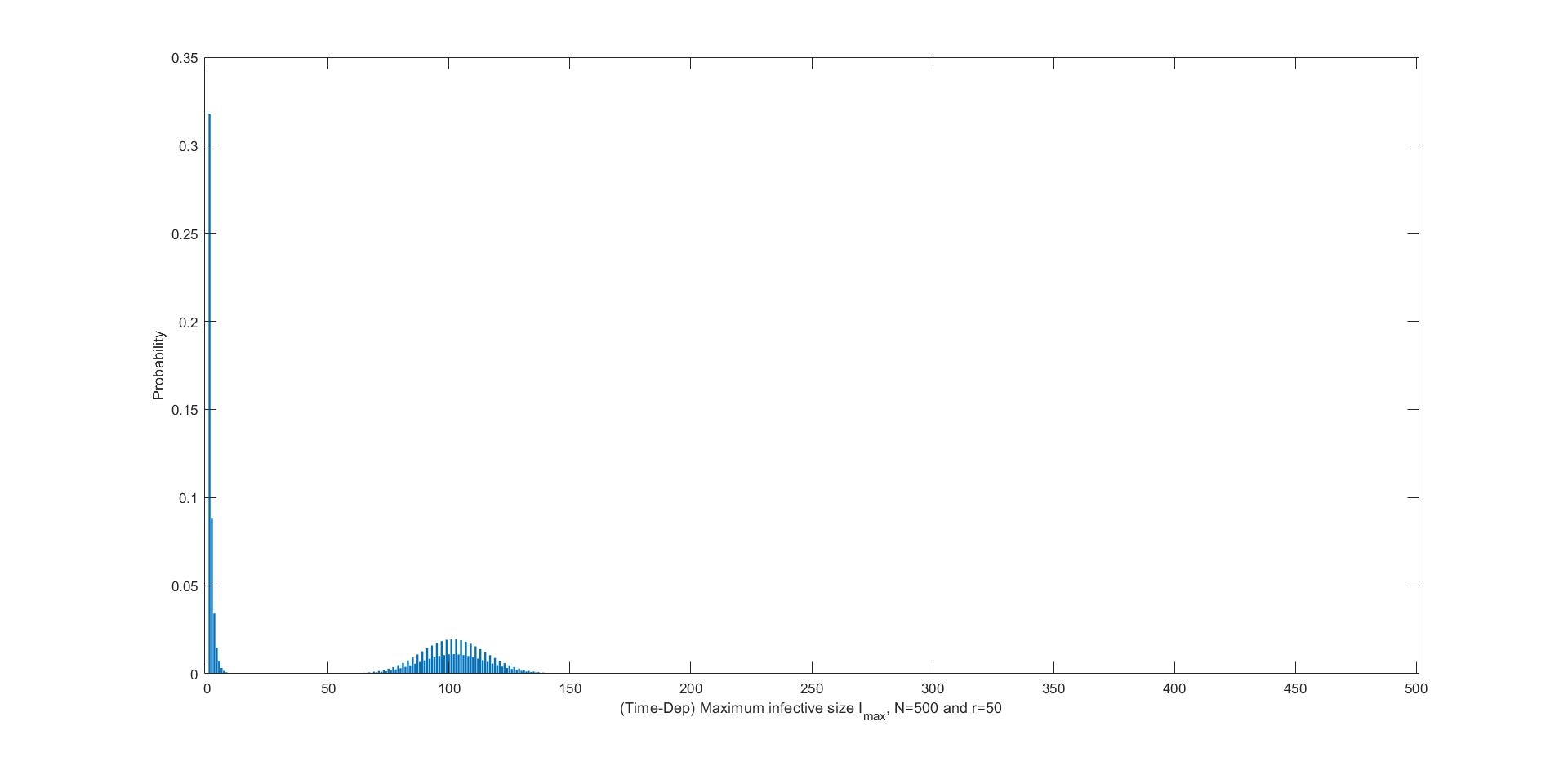}
    \includegraphics[height=3.5cm]{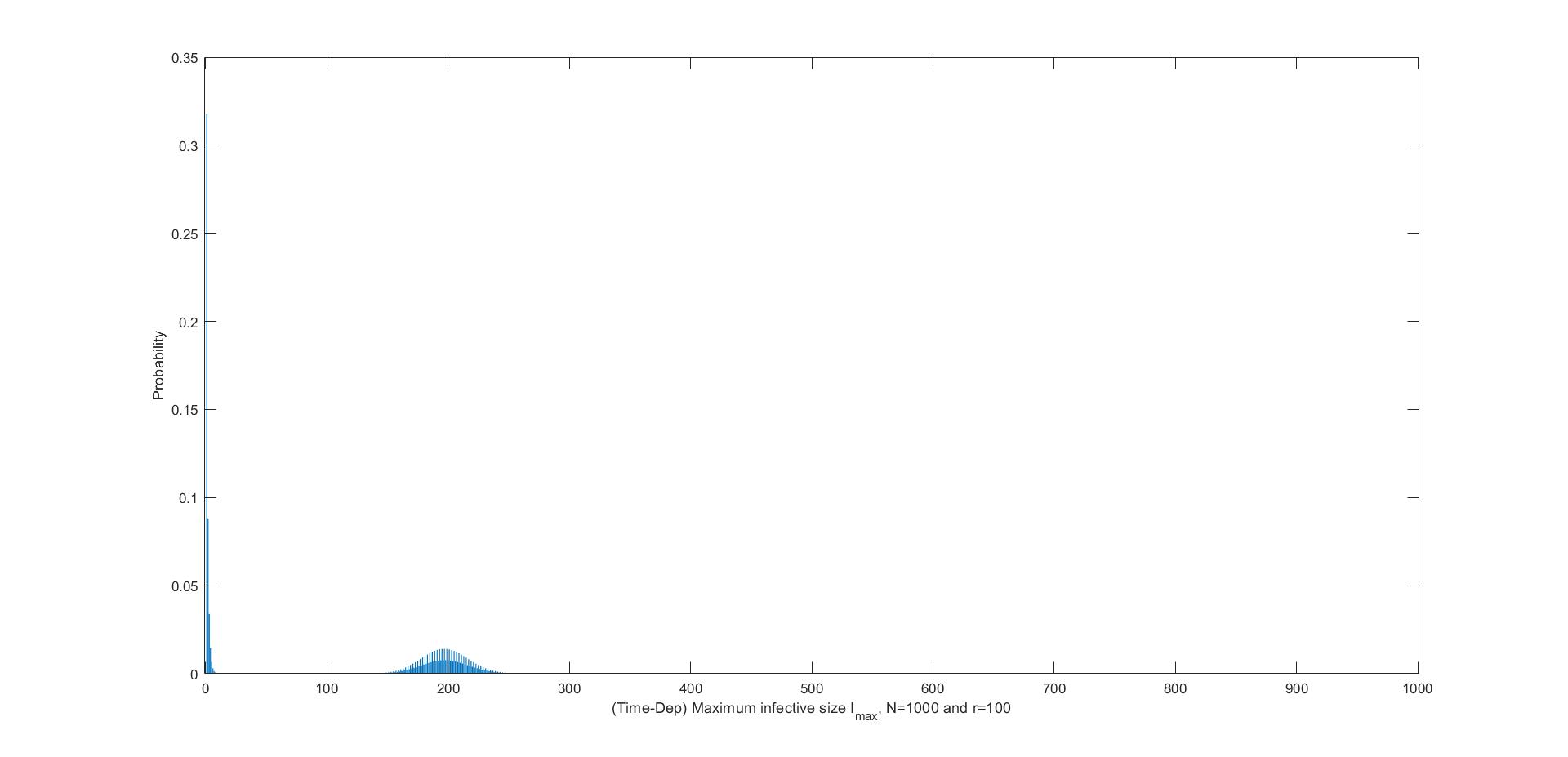}
    \caption{Maximum Size Distribution with Time-Dependent Parameters as $N=100,N=500$ and $N=1000.$}
    \label{fig:timedep max size}
\end{figure}

\subsection{Results and Discussions}\label{dis}
\indent\par The relative prediction errors of $I$ are within $3\%$ in $5$ days and $7\%$ in $10$ days. The trend of change is
closed to the real data of $I$. Therefore, the methods provided in \cite{chen2020time} and \cite{lin2020data} also have good performance in our SAIVRD model and our model meets the real situation and then the predicted parameters are reliable.

We can observe, from \cref{fig:R0}, that the time-varying basic reproduction number in our
SAIVRD model oscillates not far from $1$ and so the epidemic is likely to be lowly prevalent and there may not be another outbreak in the short term.

The distributions of the final size and the maximum size are bimodal, which agrees the results from many previous papers, such as \cite{greenwood2009stochastic}, \cite{black2015computation} and \cite{icslier2020exact}. These mean that with high probability, the epidemic ends and is maximized with very high or very low size. With time-dependent parameters, the distributions look slightly different. There is another peak in the final size distribution, the probability of minor
outbreak is higher and the maximum size distribution is oscillating with time-dependent parameters. As mentioned in \cref{data}, we can observe that the distributions are similar between different populations and so we can estimate the “ratio” of the final size and the maximum size distributions by observing those in small populations. From \cref{fig:final size} and \cref{fig:timedep final size}, under recent transmissibility of this disease in the USA, when an initial infection is introduced into all-susceptible (large) population, ``major outbreak'' \cite{allen2017primer} occurs with around $95\%$ of the population; from \cref{fig:max size} and \cref{fig:timedep max size}, with high probability the epidemic is maximized to around $30\%$ of the population. On the other hand, for the distributions with time-dependent parameters, the probability of ``minor outbreak'' \cite{allen2017primer} is higher than that with time-independent parameters and of course it follows that the probability of major outbreak is lower than that with time-independent parameters.

Finally, we discuss the relation between the final size distribution and the extinction probability with one initial infection in an infinite (or very large) population with exponential distributed infectious period. The cumulative probability distribution of the final size using the settings in \cref{secnummartime} is presented in \cref{fig:cufinalsize}.
\begin{figure}[H]
    \centering
    \includegraphics[height=5cm]{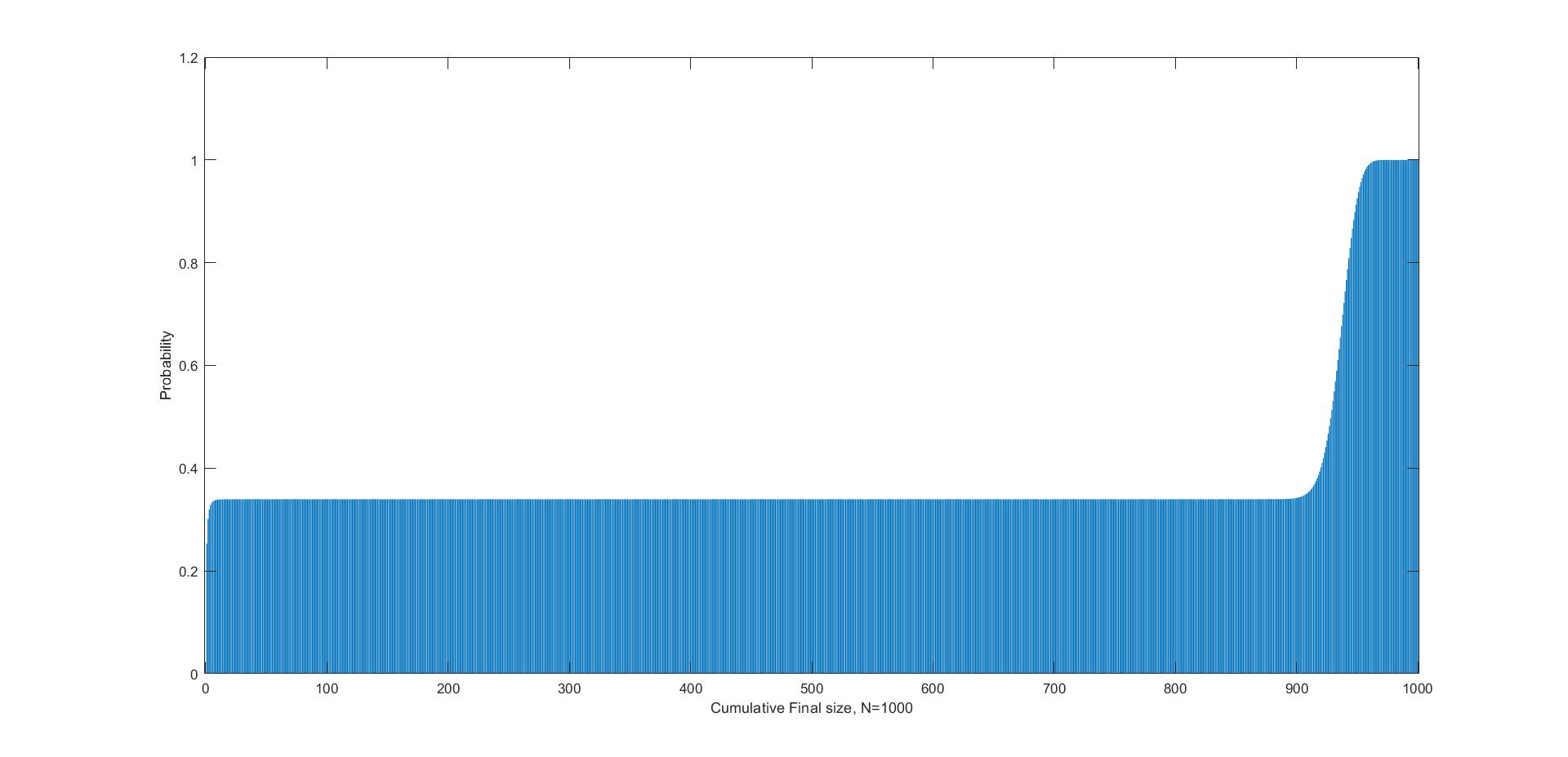}
    \caption{Cumulative Final Size Distribution with Time-Independent Parameters, $N=1000.$}
    \label{fig:cufinalsize}
\end{figure}
 By using the parameters from \cref{secnummartime}, the basic reproduction number in this SARV model is $R_0^{\mathrm{SARV}}=2.9513$ and the extinction probability is $u=0.3388,$ which is approximately the value of the plateau in \cref{fig:cufinalsize} and also $\dfrac{1}{R_0^{\mathrm{SARV}}}$. In view of the derivation in Section 2 in \cite{miller2018primer}, using the  plateau value of the cumulative final
size distribution with $N=1000$ is an ideal simulation for approximation of the extinction probability and the
outbreak probability with one initial infection in a large population.

\section{Conclusion and Future Works}\label{seccon}
\indent\par
At this stage of COVID-19, many countries, such as the UK and the USA \cite{WEBSITE:6} and \cite{WEBSITE:7}, loosen their public health policies against this epidemic, looking ahead to living with COVID-19. This decision may have theoretical basis.  Our discussion of extinction probability is an example. We can see that the extinction probability is as low as about $33.9\%$ by our simulation.

In our paper, we proposed a deterministic SAIVRD model and a stochastic Markov SARV model, simplified from the SAIVRD model, of the epidemic COVID-19. In the deterministic SAIVRD model, we analyzed the existence and the asymptotic stability of disease-free and endemic equilibria related to the basic reproduction number. Based on this model, we conducted the numerical simulations for data forecast to do short-term prediction with time-varying rates and got small relative errors (metioned in \cref{intro} and \cref{dis}), which means that our proposed model meets the real situation in the society of our data set. This forecast can also answer the first question in \cref{intro}. Next, in our stochastic Markov SARV model, we extended the parameters in branching process to be time-dependent and used this model to approximate the probability distributions of the final size and the maximum size. 
These can answer the second question in \cref{intro}. The results with time-dependent and time-independent parameters are a little different and so it is worthwhile to investigate distributions considering the time-dependent parameters instead of time-independent ones. Finally, we estimate the probability of extinction in real situation by both calculating directly and using the cumulative probability of the final size.

In time-dependent Markov SARV model, we assumed the number $r$ of steps in a day. One interesting question comes here: how does the choice of $r$ effect the results?  One possible direction is to estimate the time until the epidemic ends and then we can decide how long do we need to predict in \cref{next} so that $r$ can be chosen so that every step lies within this time. With aid of \cite{parag2020exact} and \cite{elhassan2020mathematical}, estimating the ending time of an epidemic is one of future tasks to make a better choice of $r$ in our time-dependent Markov SARV model.

Furthermore, we would extend our models to many aspects of epidemiology modeling, such as data forecast by stochastic differential equations \cite{atangana2021modeling}, the effect of diffusion \cite{lee2021mean} and even more combining them to conduct models with stochastic partial differential equations to conduct more epidemiologically realistic models.

	\printbibliography
\end{document}